\newcommand{\pnt}[1]{{\mbox{\boldmath $#1$}}}
\newcommand{\so}[1]{$\{#1\}$}
\newcommand{\s}[1]{\mbox{$\{#1\}$}}
\newcommand{\prr}[1]{\mi{Prev}(\boldsymbol{q})}
\newcommand{\mi}[1]{\mathit{#1}}
\newcommand{\ti}[1]{\textit{#1}}
\newcommand{\tb}[1]{\textbf{#1}}
\newcommand{\Dss}[4]{\mbox{$(\prob{#1}{#2},\pnt{#3})~\rightarrow #4$}}
\newcommand{\ttt}{\>\>\>}
\newcommand{\Tt}{\>\>}
\newcommand{\Sub}[2]{\mbox{$\mi{#1}_\mi{#2}$}}
\newcommand{\Sup}[2]{\mbox{$\mi{#1}^\mi{#2}$}}
\newcommand{\SUp}[3]{\mbox{$\mi{#1}_{#2}^\mi{#3}$}}
\newcommand{\SUP}[3]{\mbox{$\mi{#1}_\mi{#2}^\mi{#3}$}}
\newcommand{\PRob}[2]{$\exists{#1} [#2]$}
\newcommand{\prob}[2]{\mbox{$\exists{#1} [#2]$}}
\newcommand{\Prob}[3]{\mbox{$\exists{#1}\exists{#2}[#3]$}}
\newcommand{\Comment}[1]{}
\newcommand{\Impl}[2]{\mbox{$#1 \rightarrow #2$}}
\newcommand{\Nmpl}[2]{\mbox{$#1 \not\rightarrow #2$}}
\newcommand{\EC}{$\mi{EC\!\_LoR}$~}
\newcommand{\ECM}{$\mi{EC\!\_LoR}^*$~}
\newcommand{\ecm}{$\mi{EC\!\_LoR}^*$}
\newcommand{\tl}[1]{\mbox{$\mi{TopLvl}(#1)$}}
\newcommand{\Pqe}[4]{\mbox{$\prob{#1}{#2 \wedge #4} \equiv #3 \wedge \prob{#1}{#4}$}}
\newcommand{\pqe}[4]{$\prob{#1}{#2 \wedge #4} \equiv #3 \wedge \prob{#1}{#4}$}
\newcommand{\al}{$\alpha$}
\newcommand{\Al}{$\alpha$~}
\newcommand{\be}{$\beta$}
\newcommand{\Be}{$\beta$~}
\newcommand{\m}[1]{$\mi{Mlp}_{#1}$}
\begin{document}

%\title{Equivalence Checking By Faulty Problem Relaxation}
%\title{Equivalence Checking By Problem Relaxation}
\title{Equivalence Checking By Logic Relaxation}
%\title{Equivalence Checking By Faulty Relaxation}
%\title{Equivalence Checking By Archimedes Test}
%\title{Equivalence Checking By Selective Abstraction}

%\author{\IEEEauthorblockN{Eugene Goldberg} 
%\IEEEauthorblockA{
%email:
%eu.goldberg@gmail.com}}

\author{Eugene Goldberg}
\institute{\email{eu.goldberg@gmail.com}}
%\author{}
%\institute{}

\maketitle

\begin{abstract}
We introduce a new framework for Equivalence Checking (EC) of Boolean
circuits based on a general technique called \tb{Lo}gic
\tb{R}elaxation (LoR).  The essence of LoR is to relax the formula to
be solved and compute a superset $S$ of the set of new behaviors.
Namely, $S$ contains all new satisfying assignments that appeared due
to relaxation and does not contain assignments satisfying the original
formula. Set $S$ is generated by a procedure called partial quantifier
elimination.  If all possible bad behaviors are in $S$, the original
formula cannot have them and so the property described by this formula
holds.  The appeal of EC by LoR is twofold.  First, it facilitates
generation of powerful \ti{inductive proofs}. Second, proving
inequivalence comes down to checking the presence of some bad behaviors in
the \ti{relaxed formula} i.e. in a simpler version of the original
formula. We give experimental evidence that supports our
approach.
\end{abstract}

\section{Introduction}

%
% Motivation
%
\subsection{Motivation}
Our motivation for this work is threefold.  First, \tb{Equivalence
  Checking (EC)} is a crucial part of hardware verification. Second,
more efficient EC enables more powerful logic synthesis
transformations and so strongly impacts design quality. Third,
intuitively, there should exist robust and efficient EC methods meant
for combinational circuits computing values in a ``similar
manner''. Once discovered, these methods can be extended to EC of
sequential circuits and even software.

\subsection{Structural similarity of circuits}
\label{ssec:ssim_circs}
In this paper, we target EC of structurally similar circuits $N'$ and
$N''$.  Providing a comprehensive definition of structural similarity
is a tall order.  Instead, below we give an example of circuits that
can be viewed as structurally similar.  Let $v'$ be a variable of
circuit $N'$.  Let $S(v') = \s{v''_{i_1},\dots,v''_{i_k}}$ be a set of
variables of $N''$ that have the following property.  The knowledge of
values assigned to the variables of $S(v')$ in $N''$ under input
\pnt{x} is sufficient to find out the value of $v'$ in $N'$ under the
same input \pnt{x}. (This means that $v'$ is a function of variables
of $S(v')$ modulo assignments to $S(v')$ that cannot be produced in
$N''$ under any input.) Suppose that for every variable $v'$ of $N'$
there exists a \ti{small} set $S(v')$ with the property above. Then
one can consider circuits $N'$ and $N''$ as \tb{structurally
  similar}. The smaller sets $S(v')$, the closer $N'$ and $N''$
structurally. In particular, if $N'$ and $N''$ are identical, for
every variable $v'$ there is a set $S(v')$ consisting only of one
variable of $N''$.  An example of structurally similar circuits where
$S(v')$ consists of two variables is given in
Subsection~\ref{subsec:ec_lor}.

%
% EC by LoR
%
\subsection[EC by logic relaxation]{EC by logic relaxation}
\label{ssec:ec_by_lor}
Let $N'(X',Y',z')$ and $N''(X'',Y'',z'')$ be single-output circuits to
be checked for equivalence. Here $X'$ and $Y'$ specify the sets of
input and internal variables of $N'$ respectively and $z'$ specifies
the output variable of $N'$. The same applies to $X'',Y'',z''$ of
circuit $N''$.  A traditional way to verify the equivalence of $N'$
and $N''$ is to form a two-output circuit shown in
Fig.~\ref{fig:trad_ec} and check if $z' \neq z''$ for some input
assignment (\pnt{x'},\pnt{x''}) where \pnt{x'}=\pnt{x''}. Here
\pnt{x'} and \pnt{x''} are assignments to variables of $X'$ and $X''$
respectively.  (By saying that \pnt{p} is an assignment to a set of
variables $V$, we will assume that \pnt{p} is a \ti{complete}
assignment unless otherwise stated. That is every variable of $V$ is
assigned a value in \pnt{p}.)

Formula $\mi{EQ}(X',X'')$ relating inputs of $N'$ and
$N''$ in Fig.~\ref{fig:trad_ec} evaluates to 1 for assignments
\pnt{x'} and \pnt{x''} to $X'$ and $X''$ iff \pnt{x'}=\pnt{x''}.
(Usually, $N'$ and $N''$ are just assumed to share the same set of
input variables. In this paper, for the sake of convenience, we let
$N'$ and $N''$ have separate sets of input variables but assume that
$N'$ and $N''$ must be equivalent only for the input assignments
satisfying $\mi{EQ}(X',X'')$.)

EC by \tb{Logic Relaxation (LoR)} presented in this paper is based on
the following idea.  Let $\mi{Out}(N',N'')$ denote the set of outputs
of $N'$ and $N''$ when their input variables are constrained by
$\mi{EQ}(X',X'')$.  For equivalent circuits $N'$ and $N''$ that are
not constants, $\mi{Out}(N',N'')$ is equal to
\so{(z'=0,z''=0),(z'=1,z''=1)}.  Let $\Sup{Out}{rlx}(N',N'')$ denote
the set of outputs of $N'$ and $N''$ when their inputs are not
constrained by $\mi{EQ}(X',X'')$ . (Here $\mi{rlx}$ stands for
``relaxed'').  $\Sup{Out}{rlx}(N',N'')$ is a superset of
$\mi{Out}(N',N'')$ that may contain an output $(z'=0,z''=1)$ and/or
output $(z'=1,z''=0)$ even when $N'$ and $N''$ are equivalent.  Let
\Sub{D}{out} denote $\Sup{Out}{rlx}(N',N'') \setminus
\mi{Out}(N',N'')$. That is \Sub{D}{out} contains the outputs that can
be produced \ti{only} when inputs of $N'$ and $N''$ are independent of
each other.

\setlength{\intextsep}{2pt}
\begin{wrapfigure}{l}{1.5in}
%\begin{figure} 
 \begin{center}
    \includegraphics[height=1.2in,width=1.5in]{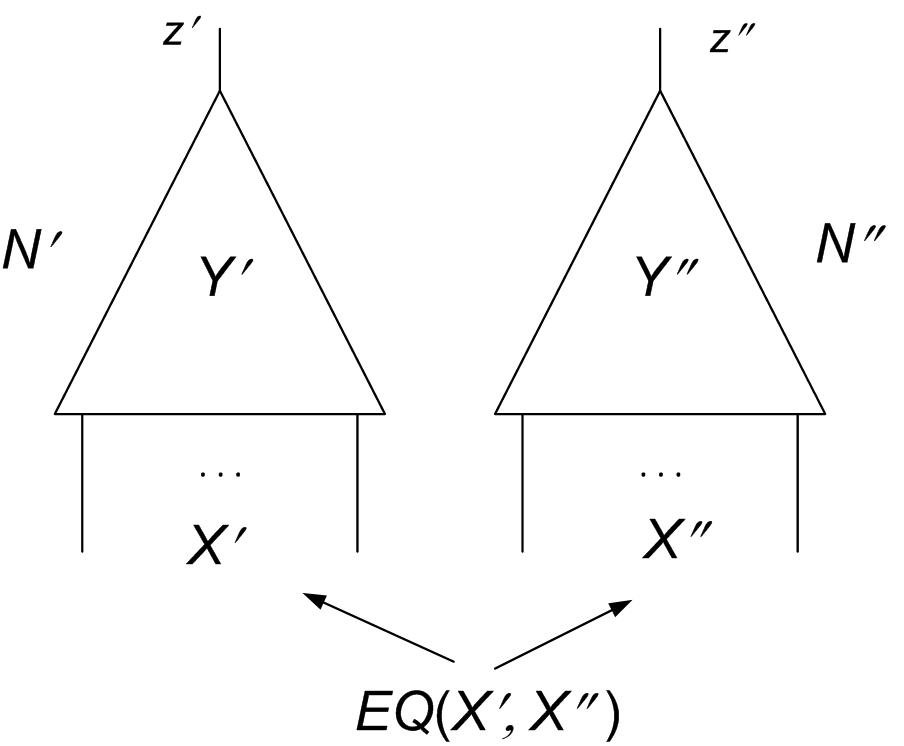}
  \end{center}
\vspace{-10pt}
\caption{Equivalence checking of $N'$ and $N''$}
\label{fig:trad_ec}
%\end{figure}
\end{wrapfigure}

Computing set \Sub{D}{out} either solves the equivalence checking of
$N'$ and $N''$ or dramatically simplifies it. (This is important
because, arguably, set \Sub{D}{out} is much easier to find than
$\mi{Out}(N',N'')$.) Indeed, assume that \Sub{D}{out} contains both
$(z'=1,z''=0)$ and $(z'=0,z''=1)$. Then $N'$ and $N''$ are
\ti{equivalent} because assignments where values of $z'$ and $z''$ are
different are present in $\Sup{Out}{rlx}(N',N'')$ but not in
$\mi{Out}(N',N'')$. Now, assume that \Sub{D}{out} does not contain,
say, assignment $(z'=1,z''=0)$. This can only occur in the following
two cases. First, $N'$ cannot produce output 1 (i.e. $N'$ is a
constant 0) and/or $N''$ cannot produce 0 (i.e. $N''$ is a constant
1).  Second, \ti{both} $\Sup{Out}{rlx}(N',N'')$ \ti{and}
$\mi{Out}(N',N'')$ contain assignment $(z'=1,z''=0)$ and hence $N'$
and $N''$ are \ti{inequivalent}.  Separating these two cases comes
down to checking if $N'$ and $N''$ can evaluate to 1 and 0
respectively. If the latter is true, $N'$ and $N''$ are inequivalent.

Set \Sub{D}{out} is built in EC by LoR by computing a sequence of
so-called \tb{boundary formulas} $H_0,\dots,H_k$.  Formula $H_i$
depends only on the variables of a cut of $N',N''$ (see
Figure~\ref{fig:bug_hunt}) and \ti{excludes} the assignments of
$\Sub{D}{Cut}_i = \SUp{Cut}{i}{rlx}(N',N'') \setminus
\mi{Cut}_i(N',N'')$. (That is every assignment of $\Sub{D}{Cut}_i$
\ti{falsifies} $H_i$).  Here $\SUp{Cut}{i}{rlx}(N',N'')$ and
$\mi{Cut}_i(N',N'')$ are the sets of all cut assignments produced when
inputs of $N'$ and $N''$ are unconstrained or constrained by
$\mi{EQ}(X',X'')$ respectively.  Formula $H_0$ is specified in terms
of cut $X' \cup X''$ and is equal to $EQ(X',X'')$. Formula $H_k$ is
computed in terms of cut \s{z',z''} and so specifies the required set
\Sub{D}{out} (as a set of assignments falsifying $H_k$).  Boundary
formulas are computed by a technique called \ti{partial quantifier
  elimination} (PQE) introduced in~\cite{hvc-14}. In PQE, only a part
of the formula is taken out of the scope of quantifiers.  So PQE can
be dramatically more efficient than complete quantifier elimination.

%
% The appeal of EC by LoR
%
%\vspace{-10pt}
\subsection{The appeal of EC by LoR}
\label{ssec:appeal}
The appeal of EC by LoR is twofold. First, EC by LoR facilitates
generation of very robust proofs by induction via construction of
boundary formulas.  By contrast, the current approaches (see
e.g.~\cite{kuehlmann97,kuehlmann02,berkeley}) employ fragile induction
proofs e.g. those that require existence of functionally equivalent
internal points.  The size of boundary formulas depends on the
similarity of $N'$ and $N''$ rather than their individual complexity.
This suggests that proofs of equivalence in EC by LoR can be generated
efficiently.

\setlength{\intextsep}{4pt}
\begin{wrapfigure}{l}{1.7in}
%\begin{figure} 
 \begin{center}
    \includegraphics[width=1.6in]{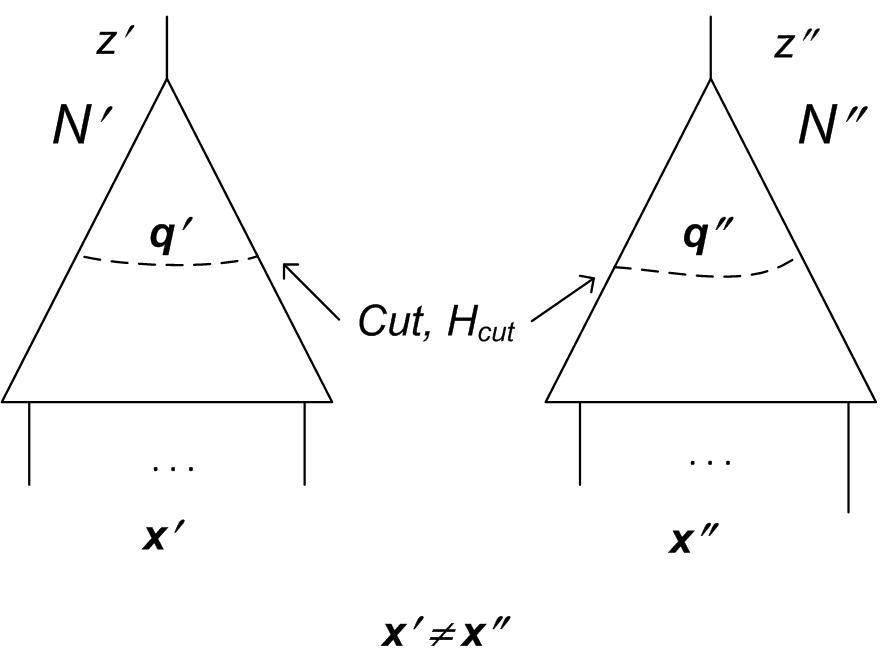}
  \end{center}
\vspace{-10pt}
\caption{Building boundary formula \Sub{H}{cut}}
%\vspace{-5pt}
\label{fig:bug_hunt}
%\end{figure}
\end{wrapfigure}

Second, the machinery of boundary formulas facilitates proving
inequivalence.  Let $\Sub{F}{N'}(X',Y',z')$ and
$\Sub{F}{N''}(X'',Y'',z'')$ be formulas specifying $N'$ and $N''$
respectively.  (We will say that a Boolean formula $F_N$ specifies
circuit $N$ if every assignment satisfying $F_N$ is a consistent
assignment to variables of $N$ and vice versa. We will assume that all
formulas mentioned in this paper are Boolean formulas in Conjunctive
Normal Form (CNF) unless otherwise stated.)

Circuits $N'$ and $N''$ are inequivalent iff formula $\mi{EQ}(X',X'')
\wedge F_{N'} \wedge F_{N''} \wedge (z' \not\equiv z'')$ is
satisfiable. Denote this formula as \al.  As we show in this paper,
\Al is equisatisfiable with formula \Be equal to $\Sub{H}{cut} \wedge
F_{N'} \wedge F_{N''} \wedge (z' \not\equiv z'')$. Here
\Sub{H}{cut}\ is a boundary formula computed with respect to a cut
(see Fig.~\ref{fig:bug_hunt}.) In general, formula \Be is easier to
satisfy than \Al for the following reason. Let \pnt{p} be an
assignment satisfying formula \Be. Let \pnt{x'} and \pnt{x''} be the
assignments to variables of $X'$ and $X''$ respectively specified by
\pnt{p}.  Since variables of $X'$ and $X''$ are independent of each
other in formula \be, in general, $\pnt{x'}\neq \pnt{x''}$ and so
\pnt{p} does not satisfy \al. Hence, neither \pnt{x'} nor \pnt{x''}
are a counterexample. They are just inputs producing cut assignments
\pnt{q'} and \pnt{q''} (see Fig.~\ref{fig:bug_hunt}) such that a)
$\Sub{H}{cut}(\pnt{q'},\pnt{q''})=1$ and b) $N'$ and $N''$ produce
different outputs under cut assignment (\pnt{q'},\pnt{q''}). To turn
\pnt{p} into an assignment satisfying \Al one has to do \ti{extra
  work}. Namely, one has to find assignments \pnt{x'} and \pnt{x''} to
$X'$ and $X''$ that are \ti{equal to each other} and under which $N'$
and $N''$ produce cut assignments \pnt{q'} and \pnt{q''} above.  Then
\pnt{x'} and \pnt{x''} specify a counterexample. So the
equisatisfiability of \Al and \Be allows one to prove $N'$ and $N''$
inequivalent (by showing that \Be is satisfiable) without providing a
counterexample.

%
% Contributions and structure of the paper
%
\subsection{Contributions and structure of the paper}
Our contributions are as follows.  First, we present a new method of
EC based on LoR meant for a very general class of structurally similar
circuits. This method is formulated in terms of a new technique called
PQE that is a ``light'' version of quantifier elimination.  Showing
the potential of PQE for building new verification algorithms is our
second contribution.  Third, we relate EC by LoR to existing methods
based on finding equivalent internal points. In particular, we show
that a set of clauses relating points of an equivalence cut of $N'$
and $N''$ is a boundary formula. So boundary formulas can be viewed as
a machinery for generalization of the notion of an equivalence
cut. Fourth, we give experimental evidence in support of EC by LoR. In
particular, we employ an existing PQE algorithm whose performance can
be drastically improved for solving non-trivial EC problems. Fifth,
we show that interpolation is a special case of LoR (and interpolants
are a special case of boundary formulas.)  In particular, we
demonstrate that by using LoR one can interpolate a ``broken''
implication. This extension of interpolation can be used for
generation of short versions of counterexamples.

The structure of this paper is as follows. Section~\ref{sec:appeal}
discusses the challenge of proving EC by induction.  In
Section~\ref{sec:rel_lor_pqe}, we show the correctness of EC by LoR
and relate the latter to partial quantifier elimination.  Boundary
formulas are discussed in Section~\ref{sec:bnd_form}.
Section~\ref{sec:ec_by_lor} presents an algorithm of EC by LoR.
Section~\ref{sec:bnd_form_approx} describes how one can apply EC by
LoR if the power of a PQE solver is not sufficient to compute boundary
formulas precisely. Section~\ref{sec:experiments} provides
experimental evidence in favor of our approach.  In
Section~\ref{sec:background}, some background is given.  We relate
interpolation and LoR in Section~\ref{sec:lor_interp} and make
conclusions in Section~\ref{sec:conclusions}. The appendix of the
paper contains six  sections with additional information.

%\vspace{4pt}
\section{Proving Equivalence By Induction}
\label{sec:appeal}
Intuitively, for structurally similar circuits $N'$ and $N''$, there
should exist a short proof of equivalence shown in
Fig.~\ref{fig:ind_proof}.  In this proof, for every set $\mi{Cut}_i$
forming a cut, only a small set $H_i$ of short clauses relating
variables of $\mi{Cut}_i$ is generated. (A \tb{clause} is a
disjunction of literals. We will use the notions of a CNF formula $C_1
\wedge .. \wedge C_p$ and the set of clauses \s{C_1,\dots,C_p}
interchangeably). The relations of $i$-th cut specified by $H_i$ are
derived using formulas $H_j$ built earlier i.e. $j < i$. This goes on
until clauses specifying $z' \equiv z''$ are derived. We will refer to
the proof shown in Fig.~\ref{fig:ind_proof} as a \tb{proof by
  induction} (slightly abusing the term ``induction'').  A good
scalability of the current EC tools is based on their ability to
derive proofs by induction.  However, they can find such proofs only
when cut variables have a very tight relation (most commonly, an
equivalence relation). This means that these tools can handle only a
very narrow subclass of structurally similar circuits.

Proving EC by induction is a challenging task because one has to
address the following \tb{cut termination problem}.  When does one
stop generating a set of clauses $H_i$ in terms of variables of
$\mi{Cut}_i$ and switch to building formula $H_{i+1}$ relating
variables of $\mi{Cut}_{i+1}$?  Let $M_i$ denote the subcircuit
consisting of the gates of $N'$ and $N''$ located below $i$-th cut
(like subcircuit $M$ of Fig.~\ref{fig:bnd_form}). A straightforward
way to build an inductive proof is to make formula $H_i$ specify the
\ti{range} of $M_i$ i.e. the set of all output assignments that can be
produced by $M_i$.  (We will also refer to the range of circuit $M_i$
as a \tb{cut image} because it specifies all assignments that can
appear on $i$-th cut.)  Then formula $H_{i+1}$ can be derived from
formula $H_i$ and the clauses specifying the gates located between
$\mi{Cut}_i$ and $\mi{Cut}_{i+1}$.  A flaw of this approach is that a
formula specifying the image of $i$-th cut can get prohibitively
large.

\setlength{\intextsep}{4pt}
\begin{wrapfigure}{l}{1.7in}
%\begin{figure} 
 \begin{center}
    \includegraphics[width=1.7in]{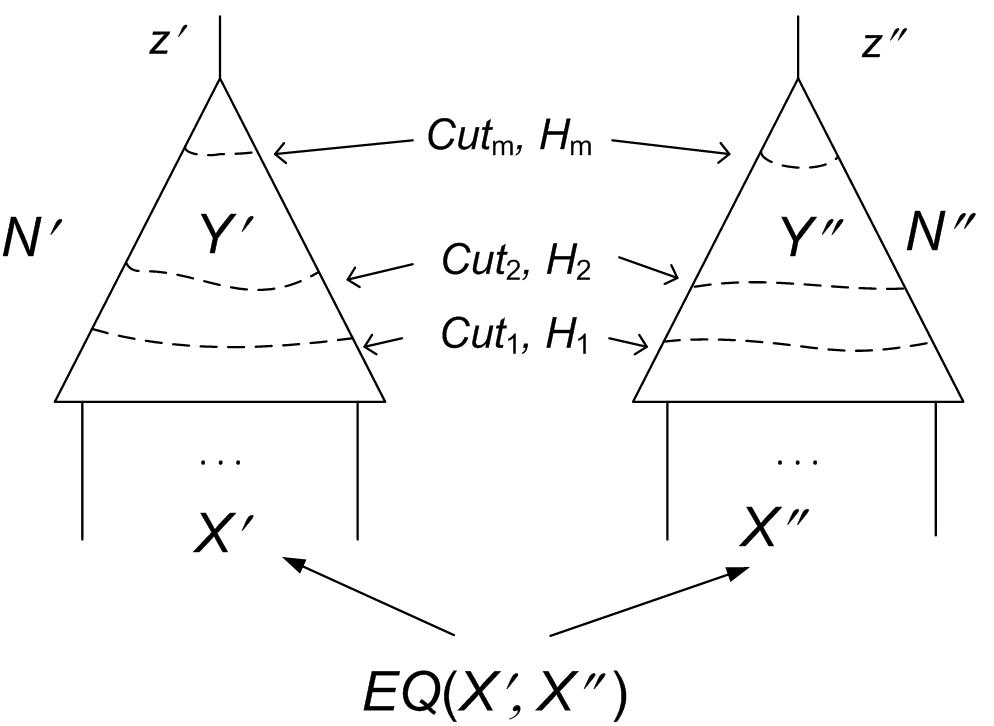}
  \end{center}
\vspace{-5pt}
\caption{An inductive proof of equivalence}
\label{fig:ind_proof}
%\end{figure}
\end{wrapfigure}

A solution offered in EC by LoR is to use the boundary formulas
introduced in Subsection~\ref{ssec:ec_by_lor} as formulas $H_i$. This
solution has at least three nice qualities. First, boundary formulas
have simple semantics. ($H_i$ excludes the assignments of
\mbox{$i$-th} cut that can be produced when inputs of $N'$ and $N''$
are independent of each other but cannot be produced when inputs are
constrained by $\mi{EQ}(X',X'')$.)  Second, the size of a boundary
formula depends on the structural similarity of circuits $N'$ and
$N''$ rather than their individual complexity. In other words, a
boundary formula computed for a cut is drastically simpler than a
formula specifying the image of this cut in $N'$ and $N''$. Third,
formula $H_i$ can be inductively derived from $H_{i-1}$, which gives
an elegant solution to the cut termination problem. The construction
of formula $H_i$ ends (and that of $H_{i+1}$ begins) when adding $H_i$
to some quantified formula containing $H_{i-1}$ makes the latter
\ti{redundant}.

%\vspace{-10pt}
%\section{Proving (In)equivalence By LoR. Its Relation To PQE}
\section{Equivalence Checking By LoR And PQE}
\label{sec:rel_lor_pqe}
In this section, we prove the correctness of Equivalence Checking (EC)
by Lo-gic Relaxation (LoR) and relate the latter to Partial Quantifier
Elimination (PQE).  Subsection~\ref{subsec:pqe_def} introduces PQE.
In Subsection~\ref{subsec:ec_lor_pqe}, we discuss proving
equivalence/inequivalence in EC by LoR. Besides, we relate EC by LoR to
PQE.

%
% Subsection: QE and PQE
%
\subsection{Complete and partial quantifier elimination}
\label{subsec:pqe_def}
In this paper, by a quantified formula we mean one with
\ti{existential} quantifiers.  Given a quantified formula
\prob{W}{A(V,W)}, the problem of \ti{quantifier elimination} is to
find a quantifier-free formula $A^*(V)$ such that $A^* \equiv
\prob{W}{A}$.  Given a quantified formula \prob{W}{A(V,W) \wedge
  B(V,W)}, the problem of \tb{Partial Quantifier Elimination}
(\tb{PQE}) is to find a quantifier-free formula $A^*(V)$ such that
\pqe{W}{A}{A^*}{B}.  Note that formula $B$ remains quantified (hence
the name \ti{partial} quantifier elimination). We will say that
formula $A^*$ is obtained by \tb{taking} \pnt{A} \tb{out of the scope
  of quantifiers} in \prob{W}{A \wedge B}. Importantly, there is a
strong relation between PQE and the notion of \ti{redundancy} of a
subformula in a quantified formula. In particular, solving the PQE
problem above comes down to finding $A^*(V)$ implied by $A \wedge B$
that makes $A$ redundant in $A^* \wedge \prob{W}{A \wedge B}$. Indeed,
in this case, $\prob{W}{A \wedge B} \equiv A^* \wedge \prob{W}{A
  \wedge B} \equiv A^* \wedge \prob{W}{B}$.

Importantly, redundancy with respect to a quantified formula is
\ti{much more powerful} than that with respect to a quantifier-free
one.  For instance, if formula $F(V)$ is satisfiable, every clause of
$F$ is redundant in formula \prob{V}{F}.  On the other hand, a clause
$C$ is redundant in a \ti{quantifier-free} formula $F$ \ti{only} if
$C$ is implied by $F \setminus \s{C}$.

Let $G(V)$ be a formula implied by $B$. Then \pqe{W}{A}{A^* \wedge
  G}{B} entails \pqe{W}{A}{A^*}{B}. In other words, clauses implied by
the formula that remains quantified are \ti{noise} and can be removed
from a solution to the PQE problem. So when building $A^*$ by
resolution it is sufficient to use only the resolvents that are
descendants of clauses of $A$. For that reason, in the case formula
$A$ is much smaller than $B$, PQE can be dramatically faster than
complete quantifier elimination.  Another way to contrast complete
quantifier elimination with PQE is as follows.  The former deals with
a single formula and so, in a sense, has to cope with its
\ti{absolute} complexity. By contrast, PQE computes formula $A^*$ that
specifies the ``difference'' between formulas $B$ and $A \wedge B$.
So the efficiency of PQE depends on their \ti{relative} complexity.
This is important because no matter how high the individual complexity
of $B$ and $A \wedge B$ is, their relative complexity can be quite
manageable.  In Section~\ref{app:pqe} of the appendix we briefly
describe an algorithm for PQE and recall some relevant
results~\cite{fmcad12,fmcad13,hvc-14}.

%
%  EC by LoR
%
\subsection{Proving equivalence/inequivalence by LoR}
\label{subsec:ec_lor_pqe}
Proposition~\ref{prop:ec_by_lor} below shows how one
proves\footnote{The proofs of propositions are given in
  Section~\ref{app:proofs} of the appendix.} equivalence/inequivalence
of circuits by LoR.  Let formula $G$ denote $\mi{EQ} \wedge
\Sub{F}{N'} \wedge \Sub{F}{N''}$ and formula \Sup{G}{rlx} denote
$\Sub{F}{N'} \wedge \Sub{F}{N''}$. Recall from
Subsection~\ref{ssec:appeal} that $F_{N'}(X',Y',z')$ and
$F_{N''}(X'',Y'',z'')$ specify circuits $N'$ and $N''$
respectively. Formula $\mi{EQ}(\pnt{x'},\pnt{x''})$ evaluates to 1 iff
\pnt{x'}=\pnt{x''} where \pnt{x'} and \pnt{x''} are assignments to
variables of $X'$ and $X''$ respectively.
%
% Proposition
%
\begin{proposition}
\label{prop:ec_by_lor}
Let $H(z',z'')$ be a formula such that
\pqe{W}{\mi{EQ}}{H}{\Sup{G}{rlx}} where $W = X' \cup X'' \cup Y' \cup
Y''$. Then formula $G \wedge (z' \not\equiv z'')$ is equisatisfiable
with $H \wedge \Sup{G}{rlx} \wedge (z' \not\equiv z'')$.
\end{proposition}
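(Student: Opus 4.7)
The plan is to exploit the fact that the output variables $z',z''$ lie outside $W=X'\cup X' \cup Y'\cup Y''$, so each side of the claimed equisatisfiability is really just a condition on assignments to $(z',z'')$. In particular, for any formula $F$ whose variables are contained in $W\cup\{z',z''\}$, satisfiability of $F$ coincides with satisfiability of $\exists W[F]$, regarded as a formula in $(z',z'')$.

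First I would rewrite the two formulas using this observation. Since $(z'\not\equiv z'')$ mentions only $z',z''$, the formula $G\wedge(z'\not\equiv z'')$ is satisfiable iff $\exists W[G]\wedge(z'\not\equiv z'')$ has a satisfying assignment in $(z',z'')$. Likewise, since $H$ depends only on $z',z''$, the formula $H\wedge \Sup{G}{rlx}\wedge(z'\not\equiv z'')$ is satisfiable iff $H\wedge\exists W[\Sup{G}{rlx}]\wedge(z'\not\equiv z'')$ has a satisfying assignment in $(z',z'')$.

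Next I would apply the PQE hypothesis \pqe{W}{\mi{EQ}}{H}{\Sup{G}{rlx}}, i.e.\ $\exists W[\mi{EQ}\wedge\Sup{G}{rlx}]\equiv H\wedge\exists W[\Sup{G}{rlx}]$. Because $G=\mi{EQ}\wedge\Sup{G}{rlx}$ by definition, this reads $\exists W[G]\equiv H\wedge\exists W[\Sup{G}{rlx}]$. Substituting this identity into the first rewriting shows that the satisfiability conditions derived in the previous paragraph are literally the same formula in $(z',z'')$, which yields the equisatisfiability.

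There is no serious obstacle; the proposition is essentially a bookkeeping consequence of the PQE identity plus the triviality that satisfiability is preserved when one existentially closes a formula over variables it is about to be conjoined with. The only point requiring care is to verify that the three subformulas involved have disjoint enough supports to justify pulling $H$ in and out of $\exists W[\cdot]$ and to let $(z'\not\equiv z'')$ commute with quantification over $W$; both facts are immediate from $H\in\mathrm{Vars}(\{z',z''\})$ and $\{z',z''\}\cap W=\emptyset$. I would conclude by stating the two directions (forward: any model of $G\wedge(z'\not\equiv z'')$ directly models $\Sup{G}{rlx}\wedge(z'\not\equiv z'')$ and its $(z',z'')$-part satisfies $H$ by the PQE identity; backward: a model of $H\wedge\Sup{G}{rlx}\wedge(z'\not\equiv z'')$ witnesses $\exists W[\Sup{G}{rlx}]$ at its $(z',z'')$-values, and the PQE identity supplies a witness of $\exists W[\mi{EQ}\wedge\Sup{G}{rlx}]$ with the same $(z',z'')$-values, which together with $(z'\not\equiv z'')$ yields a model of $G\wedge(z'\not\equiv z'')$).
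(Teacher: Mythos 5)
Your proof is correct, but it takes a genuinely different route from the paper's. The paper obtains Proposition~\ref{prop:ec_by_lor} as a corollary of the general boundary-formula machinery: Proposition~\ref{prop:bf_by_pqe} shows that any solution of \pqe{W}{\mi{EQ}}{H}{\Sup{G}{rlx}} is a boundary formula in the sense of Definition~\ref{def:bnd_form}, and Proposition~\ref{prop:bf_prop} shows that conjoining any boundary formula with the relaxed formula preserves satisfiability of the miter; both of those proofs use circuit-level reasoning (extending a cut assignment by an execution trace of the logic above the cut). You instead notice that for the output cut $\{z',z''\}$ everything collapses to quantifier bookkeeping: since $W$ contains every variable except $z'$ and $z''$, satisfiability of each side coincides with satisfiability of its existential closure over $W$; the constraints $H$ and $(z'\not\equiv z'')$ commute with $\exists W$ because they do not mention $W$; and the PQE identity $\prob{W}{G} \equiv H \wedge \prob{W}{\Sup{G}{rlx}}$ then makes the two closures the \emph{same} formula over $(z',z'')$, so equisatisfiability is immediate. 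Your argument is shorter and self-contained. What the paper's detour buys is the general statement for an arbitrary internal cut (Proposition~\ref{prop:bf_prop}), which is needed elsewhere and does not reduce to your argument: for an internal cut the outputs $z',z''$ are among the quantified variables, so one must separately argue that the witness \pnt{p^*} supplied by the boundary-formula property agrees with \pnt{p} above the cut and hence still falsifies $z'\equiv z''$. (Minor slip: you wrote $W = X'\cup X'\cup Y'\cup Y''$ where the second term should be $X''$.)
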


Note that finding formula $H(z',z'')$ of
Proposition~\ref{prop:ec_by_lor} reduces to taking formula $\mi{EQ}$
out of the scope of quantifiers i.e. to solving the PQE problem.
Proposition~\ref{prop:ec_by_lor} implies that proving
\ti{inequivalence} of $N'$ and $N''$ comes down to showing that
formula \Sup{G}{rlx} is satisfiable under assignment $(z'=b', z''=
b'')$ (where $b',b'' \in \s{0,1}$) such that $b' \neq b''$ and
$H(b',b'')=1$. Recall that the input variables of $N'$ and $N''$ are
independent of each other in formula \Sup{G}{rlx}. Hence the only
situation where \Sup{G}{rlx} is unsatisfiable under $(z'=b', z''=
b'')$ is when $N'$ is constant $\overline{b'}$ and/or $N''$ is
constant $\overline{b''}$.  So the corollary below holds.
%
% Corollary
%
\begin{corollary}
If neither $N'$ nor $N''$ are constants, they are equivalent iff
$H(1,0)=H(0,1)=0$. 
\end{corollary}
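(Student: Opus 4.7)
The plan is to chain together Proposition~\ref{prop:ec_by_lor} with an elementary case analysis on the two possible values of the pair $(z',z'')$ satisfying $z' \not\equiv z''$. By definition, $N'$ and $N''$ are equivalent on inputs satisfying $\mi{EQ}$ iff the formula $G \wedge (z' \not\equiv z'')$ is unsatisfiable. Proposition~\ref{prop:ec_by_lor} tells us this holds iff $H \wedge \Sup{G}{rlx} \wedge (z' \not\equiv z'')$ is unsatisfiable, so the corollary reduces to analyzing when the latter formula has a model.

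Since $(z' \not\equiv z'')$ forces $(z',z'') \in \{(1,0),(0,1)\}$, the formula $H \wedge \Sup{G}{rlx} \wedge (z' \not\equiv z'')$ is satisfiable iff at least one of the two substituted formulas
\[
H(1,0) \wedge F_{N'}|_{z'=1} \wedge F_{N''}|_{z''=0}, \qquad H(0,1) \wedge F_{N'}|_{z'=0} \wedge F_{N''}|_{z''=1}
\]
is satisfiable. The key structural observation, which I would invoke next, is that the variables of $N'$ and $N''$ are disjoint in $\Sup{G}{rlx} = F_{N'} \wedge F_{N''}$. Consequently each product splits: for instance, $F_{N'}|_{z'=1} \wedge F_{N''}|_{z''=0}$ is satisfiable iff $F_{N'}|_{z'=1}$ and $F_{N''}|_{z''=0}$ are each satisfiable, i.e.\ iff $N'$ can output $1$ and $N''$ can output $0$.

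Now I would use the non-constancy hypothesis. Because $N'$ is not a constant, both $F_{N'}|_{z'=0}$ and $F_{N'}|_{z'=1}$ are satisfiable, and similarly for $N''$. Hence the satisfiability of the first product reduces to $H(1,0)=1$, and that of the second to $H(0,1)=1$. Therefore $H \wedge \Sup{G}{rlx} \wedge (z' \not\equiv z'')$ is unsatisfiable iff $H(1,0)=H(0,1)=0$, and combining with the equisatisfiability from Proposition~\ref{prop:ec_by_lor} yields the corollary.

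There is essentially no hard step here: the proposition does all the heavy lifting, and the only nontrivial point is recognizing that the independence of $X'$ and $X''$ in $\Sup{G}{rlx}$ is exactly what lets us factor satisfiability across $N'$ and $N''$, so that the non-constancy assumption cleanly eliminates the degenerate cases. I would take care to state explicitly why non-constancy of $N'$ (not just ``$N'$ is not constant $\overline{b'}$'') is needed here—it guarantees satisfiability of \emph{both} $F_{N'}|_{z'=0}$ and $F_{N'}|_{z'=1}$ simultaneously, so that neither of the two cases can be vacuously eliminated.
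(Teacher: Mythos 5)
Your proposal is correct and follows essentially the same route as the paper: invoke Proposition~\ref{prop:ec_by_lor}, then use the independence (variable-disjointness) of $N'$ and $N''$ in $\Sup{G}{rlx}$ so that satisfiability under $(z'=b',z''=b'')$ factors into each circuit being able to produce the corresponding output, which the non-constancy hypothesis guarantees. The paper's own justification is exactly this observation, stated in the paragraph preceding the corollary.
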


Reducing EC to an instance of PQE also provides valuable information
when proving \ti{equivalence} of $N'$ and $N''$.  Formula \Sup{G}{rlx}
remains quantified in \pqe{W}{\mi{EQ}}{H}{\Sup{G}{rlx}}. This means
that to obtain formula $H$, it suffices to generate only resolvents
that are descendants of clauses of $\mi{EQ}$. The clauses obtained by
resolving solely clauses of \Sup{G}{rlx} are just ``noise'' (see
Subsection~\ref{subsec:pqe_def}). This observation is the basis for
our algorithm of generating EC proofs by induction.

\section{Boundary Formulas}
\label{sec:bnd_form}
In this section, we discuss boundary formulas, a key notion of EC by
LoR.  Subsection~\ref{subsec:bf_semant} explains the semantics of
boundary formulas.  Subsection~\ref{subsec:bf_size} discusses the size
of boundary formulas. In Subsection~\ref{ssec:ind_comp_bf}, we
describe how boundary formulas are built.

%
% Subsection: Some properties of boundary formulas
%
\subsection{Definition and some properties of boundary formulas}
\label{subsec:bf_semant}
Let $M$ be the subcircuit consisting of the gates of $N',N''$ located
before a cut as shown in Fig.~\ref{fig:bnd_form}.  As usual, $G$
denotes $\mi{EQ}(X',X'') \wedge F_{N'} \wedge F_{N''}$ and
\Sup{G}{rlx} does $F_{N'} \wedge F_{N''}$.
%
% Definition: boundary formula
%
\begin{definition}
\label{def:bnd_form}
Let formula \Sub{H}{cut} depend only on variables of a cut.  Let
\pnt{q} be an assignment to the variables of this cut. Formula
\Sub{H}{cut} is called \tb{boundary} if\footnote{ Since formula $(z' \not\equiv z'')$
constraining the outputs of $N'$ and $N''$ is not a part of formulas
\Sup{G}{rlx} and $G$, a boundary formula of
Definition~\ref{def:bnd_form} is not ``property driven''. This can be
fixed by making a boundary formula specify the difference between
$\Sup{G}{rlx}\wedge (z' \not\equiv z'')$ and $G \wedge (z' \not\equiv
z'')$ rather than between \Sup{G}{rlx} and $G$. In this paper, we
explore boundary formulas of Definition~\ref{def:bnd_form}.  The only
exception is Section~\ref{sec:lor_interp} where, to compare LoR and
interpolation, we use ``property-driven'' boundary formulas.

}
%\vspace{-2pt}
\begin{enumerate}[a)]
\vspace{-2pt}
\item \Impl{G}{\Sub{H}{cut}} holds and
\item for every \pnt{q} that can be extended to satisfy \Sup{G}{rlx} but
  cannot be extended to satisfy $G$, the value of
  \Sub{H}{cut}(\pnt{q}) is 0.
\end{enumerate}
\end{definition}

Note that Definition~\ref{def:bnd_form} does not specify the value of
\Sub{H}{cut}(\pnt{q}) if \pnt{q} \ti{cannot} be extended to satisfy
\Sup{G}{rlx} (and hence $G$). As we mentioned in the introduction,
formula $\mi{EQ}(X',X'')$ and formula $H(z',z'')$ of
Proposition~\ref{prop:ec_by_lor} are actually boundary formulas with
respect to cuts $X' \cup X''$ and \s{z',z''} respectively. We will
refer to $H(z',z'')$ as an \tb{output boundary formula}.
Proposition~\ref{prop:bf_by_pqe} below reduces building \Sub{H}{cut}
to PQE.

\setlength{\intextsep}{4pt}
\begin{wrapfigure}{l}{1.7in}
%\begin{figure} 
 \begin{center}
    \includegraphics[width=1.6in]{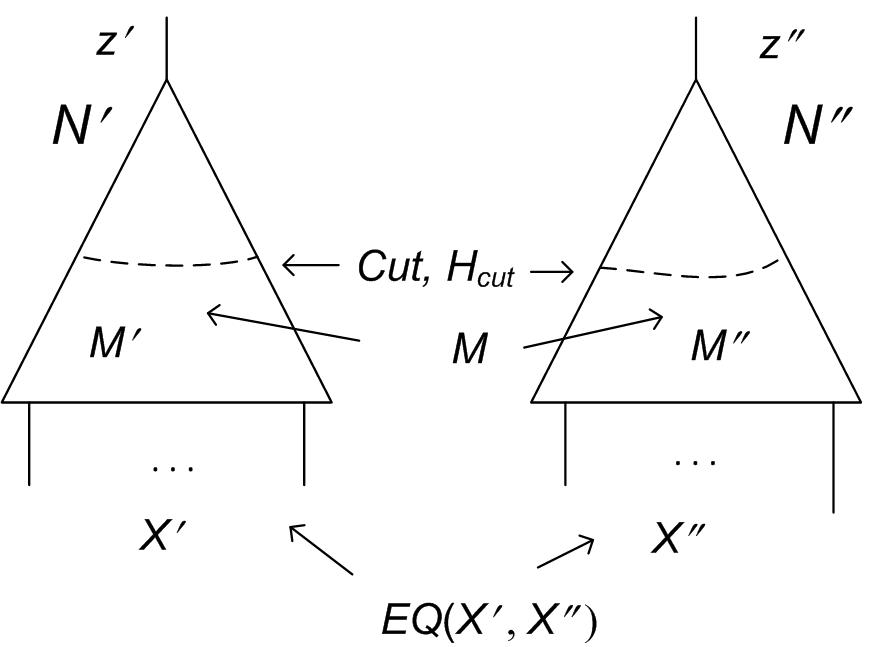}
  \end{center}
%\vspace{-10pt}
\caption{Building boundary formula \Sub{H}{cut}}
%\vspace{-5pt}
\label{fig:bnd_form}
%\end{figure}
\end{wrapfigure}

%

%
% Proposition: building bf by pqe
%
\begin{proposition}
\label{prop:bf_by_pqe}
Let \Sub{H}{cut} be a formula depending only on variables of a cut.
Let \Sub{H}{cut} satisfy \pqe{W}{\mi{EQ}}{\Sub{H}{cut}}{F_M}.  Here
$W$ is the set of variables of $F_M$ minus those of the cut.  Then
\Sub{H}{cut} is a boundary formula.
\end{proposition}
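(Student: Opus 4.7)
The claim is that the PQE identity \pqe{W}{\mi{EQ}}{\Sub{H}{cut}}{F_M} forces \Sub{H}{cut} to satisfy clauses~(a) and~(b) of Definition~\ref{def:bnd_form}. My plan is to pin down, for an arbitrary cut assignment \pnt{q}, a precise correspondence between ``\pnt{q} extends to a satisfying assignment of $G$ (resp.\ \Sup{G}{rlx})'' and ``\pnt{q} satisfies \prob{W}{\mi{EQ} \wedge F_M} (resp.\ \prob{W}{F_M})'', and then read both definitional properties directly off the PQE identity.

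The central step is translating cut-extensibility across the cut boundary. Since $F_M$ specifies the subcircuit of $N' \cup N''$ lying above the cut, its variables are exactly the cut variables together with the inputs and above-cut internal variables, which collectively form $W$. The remaining clauses $(\Sub{F}{N'} \wedge \Sub{F}{N''}) \setminus F_M$ describe gates below the cut and involve only cut variables, below-cut internal variables, and the outputs $z', z''$. Because $N'$ and $N''$ are circuits, these gate equations are deterministic, so once \pnt{q} is fixed the below-cut variables admit a unique extension satisfying them. Consequently, \pnt{q} can be extended to satisfy $G$ (resp.\ \Sup{G}{rlx}) if and only if \pnt{q} satisfies \prob{W}{\mi{EQ} \wedge F_M} (resp.\ \prob{W}{F_M}): the forward direction merely restricts a global satisfying assignment to $W$, and the backward direction lifts any witness on $W$ to a full assignment by propagating \pnt{q} through the below-cut gates.

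With this correspondence in hand, both clauses of Definition~\ref{def:bnd_form} fall out of the PQE identity. For~(a), if a full assignment \pnt{p} satisfies $G$ and \pnt{q} is its restriction to the cut, then \pnt{q} satisfies \prob{W}{\mi{EQ} \wedge F_M}; the identity then yields $\Sub{H}{cut}(\pnt{q}) \wedge \prob{W}{F_M}$, whence $\Sub{H}{cut}(\pnt{q}) = 1$, i.e.\ \Impl{G}{\Sub{H}{cut}}. For~(b), suppose \pnt{q} extends to satisfy \Sup{G}{rlx} but not $G$; by the correspondence, \pnt{q} satisfies \prob{W}{F_M} but falsifies \prob{W}{\mi{EQ} \wedge F_M}, so the PQE identity forces $\Sub{H}{cut}(\pnt{q}) \wedge \prob{W}{F_M}$ to be false at \pnt{q}, and since \prob{W}{F_M} holds there we conclude $\Sub{H}{cut}(\pnt{q}) = 0$. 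The only non-mechanical ingredient is the backward direction of the correspondence; once the deterministic nature of the below-cut subcircuit is invoked, the rest is a direct readout of the PQE identity, and I expect no hidden obstacles beyond careful bookkeeping with the variable partitions.
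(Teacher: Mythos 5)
Your proof is correct and takes essentially the same route as the paper's: both establish that a cut assignment $\pnt{q}$ extends to satisfy $G$ (resp.\ \Sup{G}{rlx}) exactly when $\pnt{q}$ satisfies \prob{W}{\mi{EQ} \wedge F_M} (resp.\ \prob{W}{F_M}), using deterministic propagation of the cut values through the output-side subcircuit (the paper's ``execution trace'' for $L$), and then read conditions a) and b) of Definition~\ref{def:bnd_form} directly off the PQE identity. The only cosmetic discrepancy is that your ``above''/``below'' labels are flipped relative to Fig.~\ref{fig:bnd_form}, where $M$ is the input-side subcircuit; the variable bookkeeping you actually use is the correct one.
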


 Proposition~\ref{prop:bf_prop} below extends
 Proposition~\ref{prop:ec_by_lor} to an arbitrary boundary formula.
% \Sub{H}{cut}.

%
% Proposition: property of bf
%
\begin{proposition}
\label{prop:bf_prop}
Let \Sub{H}{cut} be a boundary formula with respect to a cut.  Then $G
\wedge (z' \not\equiv z'')$ is equisatisfiable with $\Sub{H}{cut}
\wedge \Sup{G}{rlx} \wedge (z' \not\equiv z'')$.
\end{proposition}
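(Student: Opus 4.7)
The plan is to prove equisatisfiability by checking the two implications separately, with the harder direction relying on condition (b) of Definition~\ref{def:bnd_form} together with the fact that a cut assignment determines the outputs $z',z''$ of a combinational circuit.

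For the easy direction, I would take an assignment $\pnt{p}$ satisfying $G \wedge (z' \not\equiv z'')$ and show it also satisfies $\Sub{H}{cut} \wedge \Sup{G}{rlx} \wedge (z' \not\equiv z'')$. Since $\Sup{G}{rlx} = F_{N'} \wedge F_{N''}$ is obtained from $G = \mi{EQ} \wedge F_{N'} \wedge F_{N''}$ by dropping the conjunct $\mi{EQ}$, clearly $G \rightarrow \Sup{G}{rlx}$, so $\pnt{p}$ satisfies $\Sup{G}{rlx}$. Condition (a) of Definition~\ref{def:bnd_form} gives $G \rightarrow \Sub{H}{cut}$, so $\pnt{p}$ also satisfies $\Sub{H}{cut}$. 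The conjunct $(z' \not\equiv z'')$ is preserved trivially.

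For the harder direction, suppose $\pnt{p}$ satisfies $\Sub{H}{cut} \wedge \Sup{G}{rlx} \wedge (z' \not\equiv z'')$, and let $\pnt{q}$ be the restriction of $\pnt{p}$ to the cut variables. Then $\Sub{H}{cut}(\pnt{q}) = 1$, and $\pnt{q}$ extends (via $\pnt{p}$) to an assignment satisfying $\Sup{G}{rlx}$. By the contrapositive of condition (b) of Definition~\ref{def:bnd_form}, $\pnt{q}$ must extend to some assignment $\pnt{p}'$ satisfying $G$; otherwise one would have $\Sub{H}{cut}(\pnt{q}) = 0$, contradicting the choice of $\pnt{p}$. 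It remains to argue that $\pnt{p}'$ also satisfies $(z' \not\equiv z'')$.

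This last point is where I expect the only real subtlety: it uses that the cut separates the inputs from the outputs in the combined circuit $N' , N''$. Because $\pnt{p}'$ satisfies $F_{N'} \wedge F_{N''}$ and is consistent with the cut assignment $\pnt{q}$, the values of $z'$ and $z''$ in $\pnt{p}'$ are forced (via the deterministic gate relations between the cut and the outputs) to equal the values of $z'$ and $z''$ in $\pnt{p}$. Hence $z' \not\equiv z''$ holds in $\pnt{p}'$ as well, and $\pnt{p}'$ witnesses satisfiability of $G \wedge (z' \not\equiv z'')$. I would state this ``cut determines outputs'' fact explicitly as the one small lemma the argument rests on; everything else is straightforward unpacking of Definition~\ref{def:bnd_form}.
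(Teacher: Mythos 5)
Your proposal is correct and follows essentially the same route as the paper's proof: the easy direction via condition (a) of Definition~\ref{def:bnd_form} plus $G \rightarrow \Sup{G}{rlx}$, and the harder direction via condition (b) applied to the cut restriction $\pnt{q}$, followed by the observation that the cut assignment determines the values above the cut (in particular $z'$, $z''$), so the extension satisfying $G$ also satisfies $(z' \not\equiv z'')$. If anything, you are slightly more explicit than the paper in noting that condition (b) is only applicable because $\pnt{q}$ is known to extend to a satisfying assignment of $\Sup{G}{rlx}$.
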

%
%\vspace{-2pt}

% Proposition on boundary formulas for a structurally similar circuits
%
The proposition below estimates the size of boundary formulas built
for $N'$ and $N''$ that satisfy the notion of structural similarity
introduced in Subsection~\ref{ssec:ssim_circs}.
\begin{proposition}
\label{prop:ssim_circs}
Let $\mi{Cut'},\mi{Cut}''$ specify the outputs of circuits $M'$ and
$M''$ of Fig.~\ref{fig:bnd_form} respectively. Assume that for every
variable $v'$ of $\mi{Cut'}$ there is a set $S(v') =
\s{v''_{i_1},\dots,v''_{i_k}}$ of variables of $\mi{Cut''}$ that have
the following property.  Knowing the values of variables of $S(v')$
produced in $N''$ under input \pnt{x} one can determine the value of
$v'$ of $N'$ under the same input \pnt{x}. We assume here that $S(v')$
has this property for every possible input \pnt{x}. Let
$\mi{Max}(S(v'))$ be the size of the largest $S(v')$ over variables of
$\mi{Cut'}$. Then there is a boundary formula \Sub{H}{cut} where every
clause has at most $\mi{Max}(S(v'))+1$ literals.
\end{proposition}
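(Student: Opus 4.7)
The plan is to construct $\Sub{H}{cut}$ explicitly and then verify both defining conditions of Definition~\ref{def:bnd_form}. For each variable $v' \in \mi{Cut'}$, the hypothesis on $S(v')$ supplies a function $f_{v'}$ such that, for every input \pnt{x}, running $N''$ on \pnt{x} yields values at $S(v')$ that determine the value of $v'$ produced by running $N'$ on the same \pnt{x}; i.e., $v'(\pnt{x}) = f_{v'}(v''_{i_1}(\pnt{x}), \dots, v''_{i_k}(\pnt{x}))$. I will take $\Sub{H}{cut}$ to be the conjunction, over all $v' \in \mi{Cut'}$, of the canonical CNF encoding of the equivalence $v' \equiv f_{v'}(S(v'))$. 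By construction, every clause of this CNF uses only $v'$ together with the variables of $S(v')$, hence has at most $|S(v')| + 1 \le \mi{Max}(S(v')) + 1$ literals. The clause-width bound is then immediate from the construction; what remains is to show that $\Sub{H}{cut}$ is indeed boundary.

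Condition (a) of Definition~\ref{def:bnd_form} is straightforward. Any assignment satisfying $G$ has $\pnt{x'} = \pnt{x''}$, so the cut values are those produced by running $N'$ and $N''$ on a common input \pnt{x}. The hypothesis then guarantees $v' = f_{v'}(S(v'))$ at the cut for every $v' \in \mi{Cut'}$, so every clause of $\Sub{H}{cut}$ is satisfied.

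The real content is condition (b). I will argue its contrapositive: if $\pnt{q}$ is a cut assignment that can be extended to satisfy \Sup{G}{rlx} and $\Sub{H}{cut}(\pnt{q}) = 1$, then $\pnt{q}$ can be extended to satisfy $G$. Fix inputs $\pnt{x'}$ and $\pnt{x''}$ (possibly distinct) witnessing the extension to \Sup{G}{rlx}, so that $N'$ on $\pnt{x'}$ reproduces $\pnt{q}$ restricted to $\mi{Cut'}$ and $N''$ on $\pnt{x''}$ reproduces $\pnt{q}$ restricted to $\mi{Cut''}$. Now consider what happens when $N'$ is run on the input $\pnt{x''}$ instead. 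By the hypothesis applied with $\pnt{x} := \pnt{x''}$, every $v' \in \mi{Cut'}$ then evaluates to $f_{v'}$ applied to the values of $S(v')$ produced by $N''$ on $\pnt{x''}$; these values are exactly those in $\pnt{q}$, and the assumption $\Sub{H}{cut}(\pnt{q}) = 1$ says $f_{v'}$ applied to them equals $v'|_{\pnt{q}}$. Hence $N'$ on input $\pnt{x''}$ reproduces $\pnt{q}$ restricted to $\mi{Cut'}$ as well. Replacing $\pnt{x'}$ by $\pnt{x''}$ therefore gives an extension of $\pnt{q}$ that satisfies $\mi{EQ}$, $F_{N'}$, and $F_{N''}$ simultaneously, i.e., satisfies $G$.

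The main subtlety is precisely this last step: one must exploit that the functional dependency $v' = f_{v'}(S(v'))$ holds for \emph{every} input simultaneously, so that feeding $N'$ the ``wrong'' input $\pnt{x''}$ still forces $\mi{Cut'}$ to reassemble the assignment $\pnt{q}|_{\mi{Cut'}}$ dictated by the $N''$-values at $S(v')$. Beyond this, the proof is a routine translation of functional dependencies into a CNF with controlled clause width.
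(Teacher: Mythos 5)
Correct, and essentially the same as the paper's proof: both build \Sub{H}{cut} out of clauses of width $|S(v')|+1$ encoding the dependencies $v' \equiv f_{v'}(S(v'))$, and both hinge on the same input-swap argument (re-running $N'$ on \pnt{x''} so that the dependencies force $\mi{Cut'}$ to reproduce $\pnt{q'}$) to establish condition~(b). The only difference is presentational: the paper argues locally, exhibiting one falsified implied clause per bad cut assignment, whereas you define the formula globally up front and then verify the two conditions of Definition~\ref{def:bnd_form} directly.
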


Proposition~\ref{prop:ssim_circs} demonstrates the existence of small
boundary formulas for structurally similar circuits $N'$,$N''$.
Importantly, the size of these boundary formulas depend on
\ti{similarity} of $N'$ and $N''$ rather than their \ti{individual
  complexity}.

%
% Corollary
%
\begin{corollary}
\label{cor:eq_cut}
Let circuits $M'$ and $M''$ of Fig.~\ref{fig:bnd_form} be functionally
equivalent. Then for every variable $v' \in \mi{Cut'}$ there is a set
$S(v')=\s{v''}$ where $v''$ is the variable of $\mi{Cut''}$ that is
functionally equivalent to $v'$. In this case, formula
$\mi{EQ}(\mi{Cut}',\mi{Cut}'')$ stating equivalence of corresponding
output variables of $M'$ and $M''$ is a boundary formula for the cut
in question. This formula can be represented by $2\!*\!p$ two-literal
clauses where $p = |Cut'|=|Cut''|$.
\end{corollary}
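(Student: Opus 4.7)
The plan is to verify the three assertions in turn: (i) that $S(v')=\{v''\}$ is a legitimate choice, (ii) that $\mi{EQ}(\mi{Cut}',\mi{Cut}'')$ meets both clauses of Definition~\ref{def:bnd_form}, and (iii) that it admits the stated clausal representation. Part (i) is immediate: if $M'$ and $M''$ are functionally equivalent, then for each output $v'$ of $M'$ there is a unique output $v''$ of $M''$ computing the same function, and knowing the value of $v''$ on any input $\pnt{x}$ pins down the value of $v'$ on $\pnt{x}$ simply because $v'(\pnt{x})=v''(\pnt{x})$. Part (iii) is a direct bookkeeping step: the equivalence $v'_i \equiv v''_i$ rewrites as the two binary clauses $(v'_i\vee\neg v''_i)\wedge(\neg v'_i\vee v''_i)$, so ranging over the $p$ corresponding pairs yields $2p$ two-literal clauses.

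The heart of the argument is part (ii), which I would prove directly from Definition~\ref{def:bnd_form} rather than invoking Proposition~\ref{prop:ssim_circs} (the latter gives \emph{an} boundary formula of the right shape but not necessarily $\mi{EQ}(\mi{Cut}',\mi{Cut}'')$ itself). For condition (a), consider any satisfying assignment of $G = \mi{EQ}(X',X'')\wedge F_{N'}\wedge F_{N''}$. It fixes inputs $\pnt{x'}=\pnt{x''}$; since $M'$ and $M''$ compute the same function, the induced cut values $\pnt{q'}$ and $\pnt{q''}$ coincide component-wise, so $\mi{EQ}(\mi{Cut}',\mi{Cut}'')$ evaluates to $1$.

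For condition (b) I would argue by contrapositive: suppose $\pnt{q}=(\pnt{q'},\pnt{q''})$ satisfies $\mi{EQ}(\mi{Cut}',\mi{Cut}'')$ and can be extended to a satisfying assignment of $\Sup{G}{rlx}=F_{N'}\wedge F_{N''}$; I want to extend it to a satisfying assignment of $G$. Let $(\pnt{x'},\pnt{y'},\pnt{x''},\pnt{y''},z',z'')$ be such a relaxed extension. Using the functional equivalence of $M'$ and $M''$, feeding input $\pnt{x'}$ into $M''$ produces exactly the same cut values as feeding $\pnt{x'}$ into $M'$, namely $\pnt{q'}$, which by hypothesis equals $\pnt{q''}$. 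Hence I can replace $\pnt{x''}$ by $\pnt{x'}$ and replace $\pnt{y''}$ by the internal-signal values of $M''$ on input $\pnt{x'}$, while leaving the portion of $N''$ above the cut (and in particular $z''$) untouched, since it depends only on the cut values $\pnt{q''}$. The resulting assignment satisfies $\mi{EQ}(X',X'')$, $F_{N'}$, and $F_{N''}$, hence $G$, as required.

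The only subtle step is this last substitution argument: it uses not just that $M'$ and $M''$ realize the same input--output function, but that the part of $N''$ above the cut depends on its inputs only through the cut variables, so that editing $(\pnt{x''},\pnt{y''})$ within $M''$ is safe as long as the cut values are preserved. Once this is spelled out cleanly the rest is routine.
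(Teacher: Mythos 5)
Your proof is correct, but it takes a different route from the paper. The paper gives no standalone proof of this corollary: it is obtained by specializing Proposition~\ref{prop:ssim_circs} to $S(v')=\s{v''}$, so that $\mi{Max}(S(v'))=1$ and the clauses $l(v')\vee C''$ constructed in that proposition's proof become exactly the two-literal clauses of $v'\equiv v''$. You instead verify conditions a) and b) of Definition~\ref{def:bnd_form} directly for $\mi{EQ}(\mi{Cut}',\mi{Cut}'')$, and you are right that this is not strictly redundant: the proposition as stated only asserts the \emph{existence} of some boundary formula with short clauses, whereas the corollary names a specific formula, and in particular condition a) for the full conjunction $\mi{EQ}(\mi{Cut}',\mi{Cut}'')$ (rather than for the subset of clauses the proposition's proof happens to produce) needs the separate, easy observation you supply. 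That said, the key step of your condition-b) argument --- re-driving $M''$ with input $\pnt{x'}$, keeping the above-cut portion of the assignment, and using the fact that the above-cut logic sees its environment only through the cut variables --- is essentially the same substitution/execution-trace construction used inside the paper's proof of Proposition~\ref{prop:ssim_circs} (there the common input is taken to be $\pnt{x''}$ rather than $\pnt{x'}$; the two choices are symmetric). So what your approach buys is a self-contained verification that does not depend on unwinding the proposition's proof; what the paper's approach buys is brevity, at the cost of the slight mismatch between ``there exists a boundary formula with $2$-literal clauses'' and ``$\mi{EQ}(\mi{Cut}',\mi{Cut}'')$ is a boundary formula'' that you correctly flag.
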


Proposition~\ref{prop:ssim_circs} and the corollary above show that
the machinery of boundary formulas allows one to extend the notion of
an equivalence cut to the case where structurally similar circuits
have no functionally equivalent internal variables.

%
% Subsection: Size of boundary formulas
%
%\vspace{-5pt}
\subsection{Size of boundary formulas in general case}
\label{subsec:bf_size}
Proposition~\ref{prop:ssim_circs} above shows the existence of small
boundary formulas for a particular notion of structural similarity. In
this subsection, we make two observations that are applicable to a
more general class of structurally similar circuits than the one
outlined in Subsection~\ref{ssec:ssim_circs}.

The first observation is as follows.  Let \pnt{q} be an assignment to
the cut of Fig.~\ref{fig:bnd_form}. Assignment \pnt{q} can be
represented as (\pnt{q'},\pnt{q''}) where \pnt{q'} and \pnt{q''} are
assignments to output variables of $M'$ and $M''$
respectively. Definition~\ref{def:bnd_form} does not constrain the
value of \Sub{H}{cut}(\pnt{q}) if \pnt{q} cannot be extended to
satisfy $F_{N'} \wedge F_{N''}$. So, if, for instance, output \pnt{q'}
cannot be produced by $M'$ for any input, the value of
\Sub{H}{cut}(\pnt{q}) can be arbitrary.  This means that \Sub{H}{cut}
does not have to tell apart cut assignments that \ti{can} be produced
by $M'$ and $M''$ from those that \ti{cannot}. In other words,
\Sub{H}{cut} does not depend on the \ti{individual complexity} of $M'$
and $M''$. Formula \Sub{H}{cut} has only to differentiate cut
assignments that can be produced \ti{solely} when $\pnt{x'} \neq
\pnt{x''}$ from those that can be produced when
\pnt{x'}=\pnt{x''}. Here \pnt{x'} and \pnt{x''} are assignments to
$X'$ and $X''$ respectively.

The second observation is as follows. Intuitively, even a very broad
definition of structural similarity of $N'$ and $N''$ implies the
existence of many short clauses relating cut variables that can be
derived from $\mi{EQ}(X',X'') \wedge F_M$. These clauses can be
effectively used to eliminate the output assignments of $M$ that can
be produced only by inputs (\pnt{x'},\pnt{x''}) where $\pnt{x'} \neq
\pnt{x''}$. Proposition~\ref{prop:ssim_circs} above substantiates this
intuition in case the similarity of $N'$ and $N''$ is defined as in
Subsection~\ref{ssec:ssim_circs}.

%\vspace{-5pt}
\subsection{Computing Boundary Formulas}
\label{ssec:ind_comp_bf}
The key part of EC by LoR is to compute an output boundary formula
$H(z',z'')$.  In this subsection, we show how to build formula $H$
\ti{inductively} by constructing a sequence of boundary formulas $H_0,
\dots, H_k$ computed with respect to cuts
$\mi{Cut}_0,\dots,\mi{Cut}_k$ of $N'$ and $N''$ (see
Fig.~\ref{fig:ind_proof}). We assume that $\mi{Cut}_0=X' \cup X''$ and
$\mi{Cut}_k = \s{z',z''}$ (i.e. $H = H_k$) and $\mi{Cut}_i \cap
\mi{Cut}_j = \emptyset$ if $i \neq j$.

Boundary formula $H_0$ is set to $\mi{EQ}(X',X'')$ whereas formula
$H_i$, $i > 0$ is computed from $H_{i-1}$ as follows. Let $M_i$ be the
circuit consisting of the gates located between the inputs of $N'$ and
$N''$ and cut $\mi{Cut_i}$ (as circuit $M$ of
Fig.~\ref{fig:bnd_form}). Let $F_{M_i}$ be the subformula of
\Sup{G}{rlx} specifying $M_i$.  Let $W_i$ consist of all the variables
of $F_{M_i}$ minus those of $\mi{Cut}_i$. Formula $H_i$ is built to
satisfy \pqe{W_i}{H_{i-1}}{H_i}{F_{M_i}} and so make the previous
boundary formula $H_{i-1}$ \ti{redundant} in $H_i \wedge
\prob{W_i}{H_{i-1} \wedge F_{M_i}}$.  The fact that $H_1,\dots,H_k$
are indeed boundary formulas follows from Proposition~\ref{prop:part}.

%
% Proposition
%
\begin{proposition}
\label{prop:part}
Let $W_i$ where $i > 0$ be the set of variables of $F_{M_i}$ minus
those of $\mi{Cut}_i$.  Let $H_{i-1}$ where $i > 1$ be a boundary
formula such that $\prob{W_{i-1}}{H_0 \wedge F_{M_{i-1}}} \equiv
H_{i-1} \wedge \exists{W_{i-1}} [F_{M_{i-1}}]$.  Let
\pqe{W_i}{H_{i-1}}{H_i}{F_{M_i}} hold. Then \PRob{W_i}{H_0 \wedge
  F_{M_i}} $\equiv$ $H_i \wedge \prob{W_i}{F_{M_i}}$ holds. (So $H_i$
is a boundary formula due to Proposition~\ref{prop:bf_by_pqe}.)
\end{proposition}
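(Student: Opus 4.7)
The plan is to push the proof through a single chain of equivalences that bridges $\prob{W_i}{H_0 \wedge F_{M_i}}$ to $H_i \wedge \prob{W_i}{F_{M_i}}$ via the intermediate expression $\prob{W_i}{H_{i-1} \wedge F_{M_i}}$. The first leg uses the inductive hypothesis stated for $H_{i-1}$ together with the disjoint topology of the cuts; the second leg is a direct application of the assumed PQE identity \pqe{W_i}{H_{i-1}}{H_i}{F_{M_i}}.

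First I would set up the variable bookkeeping. Write $F_{M_i} = F_{M_{i-1}} \wedge G$, where $G$ collects the clauses of $F_{M_i}$ specifying the gates lying strictly between $\mi{Cut}_{i-1}$ and $\mi{Cut}_i$. Partition $W_i = W_{i-1} \cup V$, where $V := W_i \setminus W_{i-1}$ consists of $\mi{Cut}_{i-1}$ together with the internal wires between the two cuts. The crucial scope observations, all immediate from the placement of the cuts, are: (i) $\V{G} \cap W_{i-1} = \emptyset$, since $W_{i-1}$ contains only the primary inputs and the internal variables of $M_{i-1}$; (ii) $\V{H_{i-1}} \subseteq \mi{Cut}_{i-1} \subseteq V$, because a boundary formula depends only on its cut variables, so $\V{H_{i-1}} \cap W_{i-1} = \emptyset$; and (iii) $W_{i-1} \cap V = \emptyset$ by construction. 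These disjointness facts are exactly what licenses moving $G$ and $H_{i-1}$ across the inner quantifier in the main step.

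Then I would carry out the chain. Start from $\prob{W_i}{H_0 \wedge F_{M_i}} = \prob{V}{\prob{W_{i-1}}{H_0 \wedge F_{M_{i-1}} \wedge G}}$; by (i), pull $G$ outside the inner quantifier to obtain $\prob{V}{G \wedge \prob{W_{i-1}}{H_0 \wedge F_{M_{i-1}}}}$. Apply the induction hypothesis on $H_{i-1}$ to rewrite this as $\prob{V}{G \wedge H_{i-1} \wedge \prob{W_{i-1}}{F_{M_{i-1}}}}$. Using (i) and (ii), push $G$ and $H_{i-1}$ back inside and merge the two existential blocks, reaching $\prob{W_i}{H_{i-1} \wedge F_{M_{i-1}} \wedge G}$, which is $\prob{W_i}{H_{i-1} \wedge F_{M_i}}$. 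Finally the hypothesis \pqe{W_i}{H_{i-1}}{H_i}{F_{M_i}} rewrites this as $H_i \wedge \prob{W_i}{F_{M_i}}$, completing the proof; the parenthetical claim that $H_i$ is a boundary formula then follows from Proposition~\ref{prop:bf_by_pqe} applied with $H_0 = \mi{EQ}$ playing the role of the single formula taken out of scope.

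The hard part is not conceptual but notational: the whole argument is quantifier bookkeeping, and one has to make the three scope inclusions above explicit in order to justify moving $G$ and $H_{i-1}$ across an existential quantifier. Any looseness there risks a step that looks plausible but is not actually valid, so pinning down the layout of $W_{i-1}$, $V$, $\V{G}$, and $\V{H_{i-1}}$ once at the outset is the real work.
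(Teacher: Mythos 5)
Your proposal is correct and follows essentially the same route as the paper's proof: decompose $F_{M_i}$ into $F_{M_{i-1}}$ and the inter-cut clauses (your $G$ is the paper's $F_{i-1,i}$, your $V$ its $W_{i-1,i}$), pull the inter-cut clauses out of the inner quantifier, apply the inductive hypothesis on $H_{i-1}$, push everything back in, and finish with the assumed PQE identity. The only difference is that you make the variable-scope justifications (i)--(iii) explicit, which the paper leaves largely implicit.
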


\section{Algorithm of EC by LoR}
\label{sec:ec_by_lor}
In this section, we introduce an algorithm called \EC that checks for
equivalence two single-output circuits $N'$ and $N''$. The pseudo-code
of \EC is given in Figure~\ref{fig:ec_by_lor}.  \EC builds a sequence
of boundary formulas $H_0,\dots,H_k$ as described in
Subsection~\ref{ssec:ind_comp_bf}.  Here $H_0$ equals
$\mi{EQ}(X',X'')$ and $H_k(z',z'')$ is an output boundary
formula. Then, according to Proposition~\ref{prop:ec_by_lor}, \EC
checks the satisfiability of formula $H_k \wedge \Sup{G}{rlx} \wedge
(z' \not\equiv z'')$ where $\Sup{G}{rlx} = F_{N'} \wedge F_{N''}$.

\setlength{\intextsep}{2pt}
\begin{wrapfigure}{L}{2.1in}
%\begin{figure}
%\begin{center}
\small
%\normalsize
%\vspace{-10pt}
\begin{tabbing}
aaa\=bb\=cc\= dd\= \kill
$\mi{EC\!\_LoR}(N',N'')$\{\\
\tb{\scriptsize{1}}\> $(N',N'') := \mi{Bufferize}(N',N'')$; \\
\tb{\scriptsize{2}}\> $\mi{Cut}_0 = X' \cup X''$; \\
\tb{\scriptsize{3}}\> $\mi{Cut}_k := \s{z',z''}$; \\
\tb{\scriptsize{4}}\> $\mi{Cut}_1,..,\mi{Cut}_{k-1}\!:=\!\mi{BldCuts}(N',N'')$; \\
\tb{\scriptsize{5}}\> $H_0 := \mi{EQ}(X',X'')$; \\
$----------------$ \\
\tb{\scriptsize{6}}\> $\mi{for} (i:=1; i \leq k; i\mi{+\!+})$ \{  \\
\tb{\scriptsize{7}}\Tt $H_i = 1$;  \\
\tb{\scriptsize{8}}\Tt $F_{M_i} := \mi{SubForm}(\Sup{G}{rlx},\mi{Cut}_i)$; \\
\tb{\scriptsize{9}}\Tt $W_i := \mi{Vars}(F_{M_i}) \setminus \mi{Vars}(\mi{Cut}_i)$; \\
\tb{\scriptsize{10}}\Tt while ($\mi{true}$) \{ \\
\tb{\scriptsize{11}}\ttt  $C\!:=\!Redund(H_i\!\wedge\!\prob{W_i}{H_{i-1}\!\wedge\!F_{M_i}})$;\\
\tb{\scriptsize{12}}\ttt  if ($C = \mi{nil}$) break;\\
\tb{\scriptsize{13}}\ttt  $H_i := H_i \wedge C$;\}\} \\
$----------------$ \\
\tb{\scriptsize{14}}\> if ($H_k(0,1) = 1$) \\
\tb{\scriptsize{15}}\Tt if $(Sat(\Sup{G}{rlx} \wedge \overline{z'} \wedge z''))$ return(\ti{No}); \\
\tb{\scriptsize{16}}\> if ($H_k(1,0) = 1$) \\
\tb{\scriptsize{17}}\Tt if $(Sat(\Sup{G}{rlx} \wedge z' \wedge \overline{z''}))$ return(\ti{No}); \\
\tb{\scriptsize{18}}\> return(\ti{Yes}); \}\\
\end{tabbing} 
\vspace{-15pt}
\caption{EC by LoR}
\vspace{7pt}
\label{fig:ec_by_lor}
%\end{figure}
\end{wrapfigure}

\EC consists of three parts separated by the dotted lines in
Figure~\ref{fig:ec_by_lor}.  \EC starts the first part (lines 1-5) by
calling procedure $\mi{Bufferize}$ that eliminates non-local
connections of $N'$ and $N''$ i.e. those that span more than two
consecutive topological levels. (The \ti{topological level} of a gate
$g$ of a circuit $K$ is the longest path from an input of $K$ to $g$
measured in the number of gates on this path.) The presence of
non-local connections makes it hard to find cuts that do not
overlap. To avoid this problem, procedure $\mi{Bufferize}$ replaces
every non-local connection spanning $d$ topological levels ($d > 2$)
with a chain of $d-2$ buffers. (A more detailed discussion of this
topic is given in Section~\ref{app:cuts_levels} of the appendix.) Then
\EC sets the initial and final cuts to $X' \cup X''$ and \s{z',z''}
respectively and computes the intermediate cuts.

Boundary formulas $H_i$, $ 1 \leq i \leq k$ are computed in the second
part (lines 6-13) that consists of a \ti{for} loop. In the third part
(lines 14-18), \EC uses the output boundary formula $H_k(z',z'')$
computed in the second part to decide whether $N',N''$ are equivalent.
If $H_k(b',b'')=1$ where $b' \neq b''$ and \Sup{G}{rlx} is satisfiable
under $z'=b',z''=b''$, then $N',N''$ are inequivalent. Otherwise, they
are equivalent (line 18).

Formula $H_i$ is computed as follows. First, $H_i$ is set to constant
1.  Then, \EC extracts a subformula $F_{M_i}$ of \Sup{G}{rlx} that
specifies the gates of $N'$ and $N''$ located between the inputs and
cut $\mi{Cut}_i$. \EC also computes the set $W_i$ of quantified
variables. The main work is done in a \ti{while} loop (lines
10-13). First, \EC calls procedure $\mi{Redund}$ that is essentially a
PQE-solver.  $\mi{Redund}$ checks if boundary formula $H_{i-1}$ is
redundant in $H_i \wedge \prob{W_i}{H_{i-1} \wedge F_{M_i}}$ (the cut
termination condition).  $\mi{Redund}$ stops as soon as it finds out
that $H_{i-1}$ is not redundant yet. It returns a clause $C$ as the
evidence that at least one clause must be added to $H_i$ to make
$H_{i-1}$ redundant.  If no clause is returned by $\mi{Redund}$, then
$H_i$ is complete and \EC ends the \ti{while} loop and starts a new
iteration of the \ti{for} loop. Otherwise, \EC adds $C$ to $H_i$ and
starts a new iteration of the \ti{while} loop.

%\section{Approximate Computation Of Boundary Formulas}
\section{Computing Boundary Formulas By Current PQE Solvers}
\label{sec:bnd_form_approx}
To obtain boundary formula $H_i$, one needs to take $H_{i-1}$ out of
the scope of quantifiers in formula \prob{W_i}{H_{i-1} \wedge F_{M_i}}
whose size grows with $i$ due to formula $F_{M_i}$. So a PQE solver
that computes $H_i$ must have good scalability. On the other hand, the
algorithm of~\cite{hvc-14} does not scale well yet. The main problem
here is that learned information is not re-used in contrast to
SAT-solvers effectively re-using learned clauses. Fixing this problem
requires some time because bookkeeping of a PQE algorithm is more
complex than that of a SAT-solver (see the discussion in
Sections~\ref{app:pqe} and~\ref{app:eff_comp_bfs} of the appendix.)
In this section, we describe two methods of adapting EC by LoR to a
PQE-solver that is not efficient enough to compute boundary formulas
\ti{precisely}. (Both methods are illustrated experimentally in
Section~\ref{sec:experiments}.)

One way to reduce the complexity of computing $H_i$ is to use only a
subset of $F_{M_i}$.  For instance, one can discard the clauses of
$F_{M_i}$ specifying the gates located between cuts $\mi{Cut}_0$ and
$\mi{Cut}_p$, $0 < p < i$. In this case, boundary formula $H_i$ is
computed \ti{approximately}.  The downside of this is that
condition~b) of Definition~\ref{def:bnd_form} does not hold anymore
and so EC by LoR becomes \ti{incomplete}. Namely, if $H(b',b'')=1$
where $b' \neq b''$, the fact that \Sup{G}{rlx} is satisfiable under
$z'=b',z''=b''$ does not mean that $N'$ and $N''$ are inequivalent.
Nevertheless, even EC by LoR with approximate computation of boundary
formulas can be a powerful tool for proving $N'$ and $N''$
\ti{equivalent} for the following reason. If $H(1,0)=H(0,1)=0$,
circuits $N'$ and $N''$ are proved equivalent no matter how
intermediate boundary formulas have been built. Importantly, checking
cut termination conditions is a powerful way to structure the proof
even when boundary formulas are computed approximately.  That is,
construction of $H_i$ still ends when it makes $H_{i-1}$ redundant in
formula $H_i \wedge \prob{W_i}{H_{i-1} \wedge F_{M_i}}$.  The only
difference from computing $H_i$ precisely is that formula $F_{M_i}$ is
simplified by discarding some clauses.

Another way to adapt EC by LoR to an insufficiently efficient PQE
solver is as follows. Suppose that the power of a PQE solver is enough
to build \ti{one} intermediate boundary formula $H_i$ \ti{precisely}.
From Proposition~\ref{prop:bf_prop} it follows that formula \Al equal
to $G \wedge (z' \not\equiv z'')$ is equisatisfiable with formula \Be
equal to $\Sub{H}{cut} \wedge \Sup{G}{rlx} \wedge (z' \not\equiv
z'')$. So, to show that $N'$ and $N''$ are inequivalent it is
sufficient to find an assignment satisfying \be. As we argued in
Subsection~\ref{ssec:appeal}, finding such an assignment for \Be is
easier than for \al.

\section{Experiments}
\label{sec:experiments}
In the experiments, we used the PQE solver published in~\cite{hvc-14}
in 2014.  We will refer to this solver as \tb{PQE-14}.  As we
mentioned in Section~\ref{sec:bnd_form_approx}, PQE-14 does not scale
well yet.  So building a full-fledged equivalence checker based on \EC
would mean simultaneously designing a new EC algorithm and a new PQE
solver.  The latter is beyond the scope of our paper (although the
design of an efficient PQE-solver is discussed in
Section~\ref{app:pqe} of the appendix).  On the other hand, PQE-14 is
efficient enough to make a few important points experimentally. In the
experiments described in this section, we employed a new
implementation of PQE-14.

The experiment of Subsection~\ref{subsec:image_versus_bfs} compares
computing cut image with building a boundary formula for this
cut. (Recall that the image of a cut is the set of cut assignments
that can be produced in $N'$ and $N''$ under all possible inputs.)
This experiment also contrasts complete quantifier elimination
employed to compute cut image with PQE.  In
Subsection~\ref{subsec:ec_lor}, we apply \EC to a non-trivial instance
of equivalence checking that is hard for \ti{ABC}, a high-quality
synthesis and verification tool~\cite{abc}.  In
Subsection~\ref{subsec:sat}, we give evidence that boundary formulas
can be used to prove inequivalence more efficiently.

In the experiments, circuits $N'$ and $N''$ to be checked for
equivalence were derived from a circuit computing an output median bit
of a $k$-bit multiplier. We will refer to this circuit as \m{k}. Our
motivation here is as follows. In many cases, the equivalence of
circuits with simple topology and low fanout values can be efficiently
checked by a general-purpose SAT-solver.  This is not true for
circuits involving multipliers. In all experiments, circuits $N'$ and
$N''$ were bufferized to get rid of long connections (see
Section~\ref{sec:ec_by_lor}).

%
% Subsection: image computation versus boundary formulas
%
\subsection{Image computation versus building boundary formulas}
\label{subsec:image_versus_bfs}
%
%
%
%\vspace{-5pt}
\begin{wraptable}{l}{2.5in}
%\begin{table}
\small
\caption{\ti{Computing cut image and boundary formula. Time
    limit~=~1~hour}}
\vspace{-5pt}
\scriptsize
\begin{center}
%\begni{tabular}{|p{22pt}|p{36pt}|c|c|c|c|} \hline
\begin{tabular}{|c|c|c|c|c|c|c|} \hline
\#bits  & \#quan. & \#free    & \multicolumn{2}{c|}{cut image} & \multicolumn{2}{c|}{boundary for-}  \\
        & vars    & vars      & \multicolumn{2}{c|}{(QE)} & \multicolumn{2}{c|}{mula (PQE)}  \\ \cline{4-7}
        &         &           & result                &        & result   &    \\ 
        &         &               & size      & (s.)   & size   & (s.) \\ \hline
    8   &   32    &   84          &  3,142    & 4.0    & \tb{242}    & \tb{0.1}  \\ \hline
    9   &   36    &   104         &  4,937    & 13     & \tb{273}    & \tb{0.2}  \\ \hline
    10  &   40    &   126         &  7,243    & 51     & \tb{407}    & \tb{0.3}     \\ \hline
    11  &   44    &   150         &  9,272    & 147    & \tb{532}    & \tb{0.5}     \\ \hline
    12  &   48    &   176         &  14,731   & 497    & \tb{576}    & \tb{0.6}    \\ \hline
    13  &   52    &   206         &  19,261   & 1,299  & \tb{674}    &  \tb{0.9}    \\ \hline
    14  &   56    &   234         &   $*$     &    $*$ & \tb{971}    & \tb{1.5}    \\ \hline
    15  &   60    &   266         &   $*$     &    $*$ & \tb{1,218}  & \tb{2.0}   \\ \hline
    16  &   64    &   300         &   $*$     &    $*$ & \tb{1,411}  & \tb{3.0}   \\ \hline

\end{tabular}                
\end{center}
%\vspace{-5pt}
\label{tbl:qe_pqe}
%\end{table}
\end{wraptable}

In this subsection, we compared computation of a boundary formula
\Sub{H}{cut} and that of cut image. We used two identical copies of
circuit \m{k} as circuits $N'$ and $N''$.  As a cut of $N',N''$ we
picked the set of variables of the first topological level (every
variable of this level specifies the output of a gate fed by input
variables of $N'$ or $N''$). Computing cut image comes down to
performing quantifier elimination for formula \prob{W}{\mi{EQ}(X',X'')
  \wedge F_M}. Here $W = X' \cup X''$ and formula $F_M$ specifies the
gates of the first topological level of $N'$ and $N''$. Formula
\Sub{R}{cut} that is logically equivalent to \prob{W}{\mi{EQ} \wedge
  F_M} specifies the cut image.  Computing a boundary formula comes
down to finding \Sub{H}{cut} such that
\pqe{W}{\mi{EQ}}{\Sub{H}{cut}}{F_M} i.e. solving the PQE problem.

The results of the experiment are given in
Table~\ref{tbl:qe_pqe}. Abbreviation QE stands for Quantifier
Elimination. The value of $k$ in \m{k} is shown in the first
column. The next two columns give the number of quantified and free
variables in \prob{W}{\mi{EQ} \wedge F_M}. To compute formula
\Sub{R}{cut} above we used our quantifier elimination program
presented in~\cite{fmcad13}. Formula \Sub{H}{cut} was generated by
PQE-14. To make this comparison fair, formula \Sub{H}{cut} was
computed without applying any EC-specific heuristics (as opposed to
computing boundary formulas in the experiments of
Subsection~\ref{subsec:ec_lor}).  When computing image formula
\Sub{R}{cut} and boundary formula \Sub{H}{cut} we recorded the size of
the result (as the number of clauses) and the run time in seconds. As
Table~\ref{tbl:qe_pqe} shows, formulas \Sub{H}{cut} are much smaller
than \Sub{R}{cut} and take much less time to compute.

%
% Subsection: Experiments with EC_LoR
%
\subsection{An example of equivalence checking by \EC}
\label{subsec:ec_lor}
\setlength{\intextsep}{4pt}
\begin{wrapfigure}{l}{2.3in}
%\begin{figure} 
 \begin{center}
    \includegraphics[height=1.7in,width=2.1in]{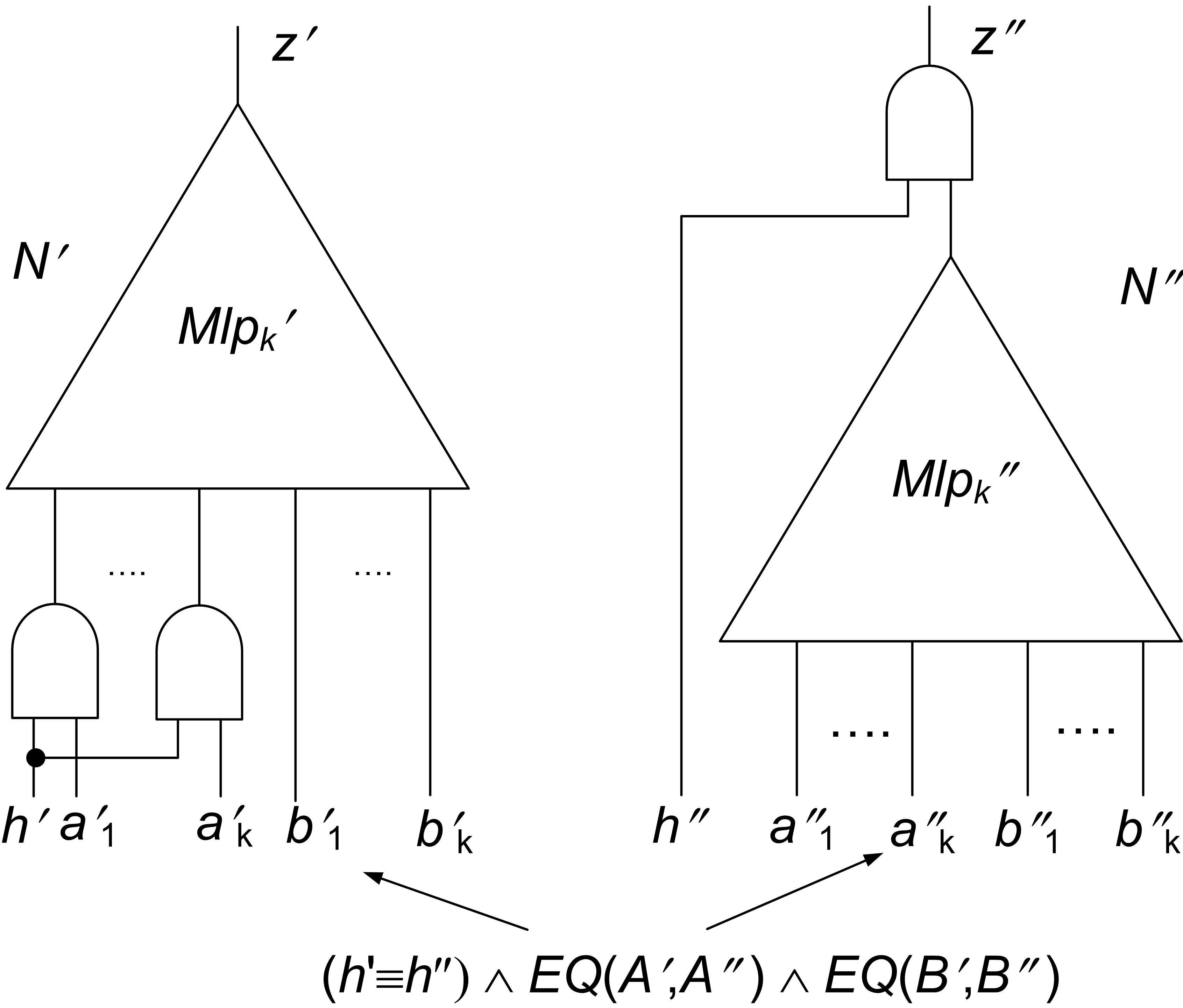}
  \end{center}
%\vspace{-10pt}
\caption{Equivalence checking of $N'$ and $N''$ derived from \m{k}}
\vspace{5pt}
\label{fig:mlps}
%\end{figure}
\end{wrapfigure}

In this subsection, we run an implementation of \EC introduced in
Section~\ref{sec:ec_by_lor} on circuits $N'$ and $N''$ shown in
Fig.~\ref{fig:mlps}. (The idea of this EC example was suggested by
Vigyan Singhal~\cite{priv_comm}.)  These circuits are derived from
\m{k} by adding one extra input $h$.  Both circuits produce the same
output as \m{k} when $h=1$ and output 0 if $h=0$. So $N'$ and $N''$
are logically equivalent.  Note that the value of every internal
variable of $N'$ depends on $h$ whereas this is not the case for
$N''$.  So $N'$ and $N''$ have no functionally equivalent internal
variables.  On the other hand, $N'$ and $N''$ satisfy the notion of
structural similarity introduced in Subsection~\ref{ssec:ssim_circs}.
Namely, the value of every internal variable $v'$ of $N'$ is specified
by that of $h''$ and some variable $v''$ of $N''$ (So, in this case,
for every internal variable $v'$ of $N'$ there is a set $S(v')$
introduced in Subsection~\ref{ssec:ssim_circs} consisting of only two
variables of $N''$.). In particular, if $v'$ is an internal variable
of $\mi{Mlp}_k'$, then $v''$ is the corresponding variable of
$\mi{Mlp}_k''$.  Indeed, if $h''=1$, then $v'$ takes the same value as
$v''$. If $h''=0$, then $v'$ is a constant (in the implementation of
$\mi{Mlp}_k$ we used in the experiments).  The objective of the
experiment below is to show that \EC can check for equivalence
structurally similar circuits that have no functionally equivalent
internal points.

Cuts $\mi{Cut}_0,\dots,\mi{Cut}_m$ used by \EC \\ were generated
according to topological levels. That is every variable of
$\mi{Cut}_i$ specified the output of a gate of $i$-th topological
level. Since $N'$ and $N''$ were bufferized, $\mi{Cut}_i \cap
\mi{Cut}_j = \emptyset$ if $i \neq j$. The version of \EC we used in
the experiment was slightly different from the one described in
Fig.~\ref{fig:ec_by_lor}. We will refer to this version as \ecm.  The
main change was that boundary formulas were computed in \ECM
\ti{approximately}. That is when checking if formula $H_j$ was
redundant in $H_j \wedge \prob{W_i}{H_{j-1} \wedge F_{M_i}}$ (line 11
of Fig.~\ref{fig:ec_by_lor}) only a subset of clauses of $F_{M_i}$ was
used to make the check simpler. Nevertheless, \ECM was able to compute
an output boundary formula $H(z',z'')$ proving that $N'$ and $N''$
were equivalent. One more difference between \EC and \ECM was as
follows.  \EC runs a cut termination check \ti{every time} formula
$H_i$ is updated (in the \ti{while} loop of Fig.~\ref{fig:ec_by_lor},
lines 10-13). In \ecm, the number of cut termination checks was
reduced. Namely, derivation of clauses of $H_i$ was modified so that
\ECM did not run a cut termination check if some cut variable was not
present in clauses of $H_i$ yet. The intuition here was that in that
case $H_i$ was still under-constrained. \ECM is described in
Section~\ref{app:version} of the appendix in more detail.

\begin{wraptable}{l}{2.5in}
%\begin{table}
\small
%\vspace{4pt}
\caption{\ti{EC of $N'$ and $N''$ derived from \m{k}. Time limit = 6
    hours}} \scriptsize
\begin{center}
%\begin{tabular}{|p{22pt}|p{36pt}|c|c|c|c|} \hline
\begin{tabular}{|c|c|c|c|c|c|} \hline
 \#bits & \#vars    & \#clauses &\#cuts  & \ECM      & \ti{ABC}     \\ 
        &           &           &        &  (s.)     &  (s.)    \\ \hline
   10   &    2,844  &  6,907    &   37   & \tb{4.5}  &  10    \\ \hline
   11   &    3,708  &  8,932    &   41   & \tb{7.1}  &  38    \\ \hline
   12   &    4,726  &  11,297   &   45   & \tb{11}   &  142   \\ \hline
   13   &    5,910  &  14,026   &   49   & \tb{16}   &  757   \\ \hline
   14   &    7,272  &  17,143   &   53   & \tb{25}   &  3,667 \\ \hline
   15   &    8,824  &  20,672   &   57   & \tb{40}   &  11,237 \\ \hline
   16   &   10,578  &  24,637   &   61   & \tb{70}   &  $>$ 21,600     \\ \hline
\end{tabular}                
\end{center}
\label{tbl:ec_by_lor}
%\vspace{-15pt}
%\end{table}
\end{wraptable}

In Table~\ref{tbl:ec_by_lor}, we compare \ECM \\ with
\ti{ABC}~\cite{abc}.  The first column gives the value of $k$ of \m{k}
used in $N'$ and $N''$.  The next two columns show the size of
formulas $\mi{EQ}(X',X'') \wedge F_{N'} \wedge F_{N''} \wedge (z'
\not\equiv z'')$ specifying equivalence checking of $N'$ and $N''$ to
which \ECM was applied. (Circuits $N'$ and $N''$ were fed into
\ti{ABC} as circuits in the BLIF format.)  Here $X =
\s{h,a_1,\dots,a_k,b_1,\dots,b_k}$ denotes the set of input variables.
The fourth column shows the number of topological levels in circuits
$N'$ and $N''$ and so the number of cuts used by \ecm.  The last two
columns give the run time of \ECM and \ti{ABC}.

The results of Table~\ref{tbl:ec_by_lor} show that equivalence
checking of $N'$ and $N''$ derived from \m{k} was hard for
\ti{ABC}. On the other hand, \ECM managed to solve all instances in a
reasonable time. Most of the run time of \ECM is taken by PQE-14 when
checking cut termination conditions. So, PQE-14 is also the reason why
the run time of \ECM grows quickly with the size of \m{k}. Using a
more efficient PQE-solver should reduce such a strong dependency of
the performance of \ECM on the value of $k$.

%
% Subsection:SAT solving
%
%\vspace{-12pt}
\subsection{Using boundary formulas for proving inequivalence}
%\vspace{-12pt}
\label{subsec:sat}
In the experiment of this subsection, we checked for equivalence a
correct and a buggy version of \m{16} as circuits $N'$ and $N''$
respectively. Since \ECM described in the previous subsection computes
boundary formulas approximately, one cannot directly apply it to prove
inequivalence of $N'$ and $N''$.  In this experiment we show that the
precise computation of even \ti{one} boundary formula corresponding to
an intermediate cut can be quite useful for proving inequivalence.
Let \Al and \Be denote formulas $\mi{EQ}(X',X'') \wedge F_{N'} \wedge
F_{N''} \wedge (z' \equiv z'')$ and $H_i \wedge F_{N'} \wedge F_{N''}
\wedge (z' \equiv z'')$ respectively. Here $H_i$ is a boundary formula
precisely computed for the cut of $N'$ and $N''$ consisting of the
gates with topological level equal to $i$.  According to
Proposition~\ref{prop:bf_prop}, \Al and \Be are
equisatisfiable. Proving $N'$ and $N''$ inequivalent comes down to
showing that \Be is satisfiable.  Intuitively, checking the
satisfiability of \Be the easier, the larger the value of $i$ and so
the closer the cut to the outputs of $N'$ and $N''$.  In the
experiment below, we show that computing boundary formula $H_i$ makes
proving inequivalence of $N'$ and $N''$ easier even for a cut with a
small value of $i$.

Bugs were introduced into circuit $N''$ \ti{above} the cut (so $N'$
and $N''$ were identical \ti{below} the cut).  Let $M'_i$ and $M''_i$
denote the subcircuits of $N'$ and $N''$ consisting of the gates
located below the cut (like circuits $M'$ and $M''$ in
Fig.~\ref{fig:bnd_form}). Since $M'_i$ and $M''_i$ are identical they
are also functionally equivalent. Then Corollary~\ref{cor:eq_cut}
entails that formula $H_i$ equal to
$\mi{EQ}(\mi{Cut'_i},\mi{Cut''_i})$ is boundary. Here $\mi{Cut'_i}$
and $\mi{Cut''_i}$ specify the output variables of $M'_i$ and $M''_i$
respectively.  Derivation of $\mi{EQ}(\mi{Cut'_i},\mi{Cut''_i})$ for
identical circuits $M'_i$ and $M''_i$ is trivial. However,
\ti{proving} that $H_i$ equal to $\mi{EQ}(\mi{Cut'_i},\mi{Cut''_i})$
is indeed a boundary formula is \ti{non-trivial} even for identical
circuits. (According to Proposition~\ref{prop:bf_by_pqe}, this
requires showing that $\mi{EQ}(X',X'')$ is redundant in $H_i \wedge
\prob{W}{\mi{EQ}(X',X'') \wedge F_{M'_i} \wedge F_{M''_i}}$.)  In
experiments we used cut with $i = 3$ i.e. the gates located below the
cut had topological level less or equal to 3. Proving that
$\mi{EQ}(\mi{Cut'_i},\mi{Cut''_i})$ is a boundary formula takes a
fraction of a second for $i=3$ but requires much more time for $i=4$.

\begin{wraptable}{l}{1.85in}
\small
\caption{\ti{Sat-solving of formulas \Al and \Be by Minisat. Time
    limit = 600 s.}}  \scriptsize
\begin{center}
%\begin{tabular}{|p{22pt}|p{36pt}|c|c|c|c|} \hline
\begin{tabular}{|c|c|c|c|} \hline
 formula    & \#solv-   & total      &  median  \\ 
  type      & ed        & time (s.)  &  time (s.)    \\ \hline
  \al     &    95       & $>$ 3,490  &  4.2     \\ \hline
  \be    &  \tb{100}   & \tb{1,030} &  \tb{1.0}     \\  \hline
\end{tabular}                
\end{center}
\label{tbl:sat}
%\vspace{-20pt}
%\end{table}
\end{wraptable}

We generated 100 buggy versions of \m{16}.  Table~\ref{tbl:sat}
contains results of checking the satisfiability of 100 formulas \Al
and \Be by Minisat~2.0~\cite{minisat,minisat2.0}. Similar results were
observed for the other SAT-solvers we tried. The first column of
Table~\ref{tbl:sat} shows the type of formulas (\Al or \Be). The
second column gives the number of formulas solved in the time limit of
600 s. The third column shows the total run time on all formulas. We
charged 600 s. to every formula \Al that was not solved within the
time limit. The run times of solving formulas \Be include the time
required to build $H_3$.  The fourth column gives the median time. The
results of this experiment show that proving satisfiability of \Be is
noticeably easier than that of \al. Using formula \Be for proving
inequivalence of $N'$ and $N''$ should be much more beneficial if
formula $H_i$ is computed for a cut with a greater value of
$i$. However, this will require a more powerful PQE-solver than
PQE-14.

\section{Some Background}
\label{sec:background}
The EC methods can be roughly classified into two groups. Methods of
the first group do not assume that circuits $N'$ and $N''$ to be
checked for equivalence are structurally similar. Checking if $N'$ and
$N''$ have identical BDDs~\cite{bryant_bdds1} is an example of a
method of this group.  Another method of the first group is to reduce
EC to SAT and run a general-purpose
SAT-solver~\cite{grasp,chaff,minisat,picosat}. A major flaw of these
methods is that they do not scale well with the circuit size.

Methods of the second group try to exploit the structural similarity
of $N',N''$.  This can be done, for instance, by making
transformations that produce isomorphic subcircuits in $N'$ and
$N''$~\cite{beds} or make simplifications of $N'$ and $N''$ that do
not affect their range~\cite{kukula}. The most common approach used by
the methods of this group is to generate an inductive proof by
computing simple relations between internal points of
$N',N''$. Usually, these relations are
equivalences~\cite{kuehlmann97,kuehlmann02,berkeley}. However, in some
approaches the derived relations are implications~\cite{kunz} or
equivalences modulo observability~\cite{brand}. The main flaw of the
methods of the second group is that they are very ``fragile''.  That
is they work only if the equivalence of $N'$ and $N''$ can be proved
by derivation of relations of a very small class.

\section{Logic Relaxation And Interpolation}
\label{sec:lor_interp}
In this section, we compare LoR and interpolation. In
Subsection~\ref{ssec:gen_rlx}, we give a more general formulation of
LoR in terms of arbitrary CNF formulas. In
Subsection~\ref{ssec:spec_case_bf}, we show that interpolation is a
special case of LoR and interpolants are a special case of boundary
formulas. We also explain how one can use LoR to interpolate
a``broken'' implication. This extension of interpolation can be used
for generation of short versions of counterexamples.  Finally, in
Subsection~\ref{ssec:int_bf_ec}, we contrast interpolants with
boundary formulas employed in EC by LoR.

So far we have considered a boundary formula specifying the difference
in assignments satisfying formulas \Sup{G}{rlx} and $G$ equal to
$F_{N'} \wedge F_{N''}$ and $\mi{EQ}(X',X'') \wedge F_{N'} \wedge
F_{N''}$ respectively.  In the footnote of Section~\ref{sec:bnd_form},
we mentioned that one can also consider ``property driven'' boundary
formulas. Such formulas specify the difference in assignments
satisfying $\Sup{G}{rlx} \wedge (z' \not\equiv z'')$ and $G \wedge (z'
\not\equiv z'')$ rather than \Sup{G}{rlx} and $G$. In this section, to
simplify explanation, we use ``property driven'' boundary
formulas. They describe the difference in assignments satisfying a
relaxed formula and an original formula that is supposed to be
unsatisfiable.

%
% Subsec: Generlizing LoR to Arbitrary Formulas
%
\subsection{Generalizing LoR to arbitrary formulas}
\label{ssec:gen_rlx}
Let $S(X,Z)$ be a formula whose satisfiability one needs to check.
Here $X$ and $Z$ are non-overlapping sets of Boolean variables.  In
the context of formal verification, one can think of $S$ as obtained
by conjoining formulas $G(X,Z)$ and $\overline{\mi{Good}(Z)}$. Here
$G(X,Z)$ specifies the consistent design behaviors, $X$ and $Z$ being
sets of ``internal'' and ``external'' variables. Formula $\mi{Good}$
specifies design behaviors that preserve a required property defined
in terms of external variables.

Let formula $S$ be represented as $\Sup{S}{rlx}(X,Z) \wedge
E(X,Z)$. Formula \Sup{S}{rlx} can be viewed as a relaxation of $S$
that is easier to satisfy. Let $H(Z)$ be a formula obtained by taking
$E$ out of the scope of quantifiers in \prob{X}{E \wedge \Sup{S}{rlx}}
i.e. $\prob{X}{E \wedge \Sup{S}{rlx}} \equiv H \wedge
\prob{X}{\Sup{S}{rlx}}$.  Then $S$ is \ti{equisatisfiable} to $H
\wedge \Sup{S}{rlx}$ (see Proposition~\ref{prop:gen_method} of the
appendix). Checking the satisfiability of $H \wedge \Sup{S}{rlx}$
reduces to testing the satisfiability of \Sup{S}{rlx} under
assignments to $Z$ for which $H$ evaluates to 1. So, if formula $S$ is
``sufficiently'' relaxed in \Sup{S}{rlx} and $Z$ is much smaller than
$X$, solving formula $H \wedge \Sup{S}{rlx}$ can be drastically simply
than $S$.

One can view $H$ as a \ti{boundary formula} specifying the difference
in assignments satisfying $S$ and \Sup{S}{rlx}. In particular, formula
$H$ satisfies the properties of Definition~\ref{def:bnd_form} (see
Proposition~\ref{prop:gen_bf} of the appendix). That is a) $S
\rightarrow H$ and b) $H(\pnt{z})=0$ for every assignment \pnt{z} to
$Z$ that can be extended to satisfy \Sup{S}{rlx} but not $S$.

%
%  Subsec: Interpolation as a special case of LoR
%
\subsection{Interpolation as a special case of LoR}
\label{ssec:spec_case_bf}
Let formula $S$ denote $A(X,Y) \wedge B(Y,Z)$ where $X,Y,Z$ are
non-overlapping sets of variables.  Let a relaxed formula \Sup{S}{rlx}
be obtained from $S$ by dropping the clauses of $A$ i.e. $\Sup{S}{rlx}
= B$.  Let \pqe{W}{A}{H}{B} hold for a formula $H(Y)$ where $W = X
\cup Z$.  Then, $H$ is a boundary formula in terms of $Y$ for
relaxation \Sub{S}{rlx}.  That is from Proposition~\ref{prop:gen_bf}
it follows that a) \Impl{S}{H} and b) $H(\pnt{y})=0$ for every
assignment \pnt{y} to $Y$ that can be extended to satisfy \Sup{S}{rlx}
but not $S$.

Let $A \wedge B \equiv 0$ and $A \rightarrow H$ hold (the latter being
a stronger version of $S \rightarrow H$).  Then $H$ is an
interpolant~\cite{craig,pudlak,ken03} for implication
\Impl{A}{\overline{B}} (see Proposition~\ref{prop:interp_as_bf} of the
appendix). So an interpolant is a \ti{special case} of a boundary
formula.

Suppose that \Impl{A}{H} and $A \wedge B \not\equiv 0$ (and hence
\Nmpl{A}{\overline{B}}).  Then $H \wedge B \not\equiv 0$ and $H$ can
be viewed as an interpolant for the \ti{broken implication}
\Nmpl{A}{\overline{B}}. When \Impl{A}{\overline{B}} holds,
\Impl{H}{\overline{B}} gives a more abstract version of the
former. Similarly, if \Nmpl{A}{\overline{B}}, then
\Nmpl{H}{\overline{B}} is a more abstract version of the former.
Interpolants of broken implications can be used to generate \ti{short
  versions of counterexamples}.  A counterexample breaking
\Impl{H}{\overline{B}} can be extended to one breaking
\Impl{A}{\overline{B}} (see Proposition~\ref{prop:abs_cex} of the
appendix). So a counterexample for \Impl{H}{\overline{B}} is a short
version of that for \Impl{A}{\overline{B}}.

%
% Subsec: Interpolation and LoR in the context of equivalence checking
%
\subsection{Interpolation and LoR in the context of equivalence checking}
\label{ssec:int_bf_ec}

\setlength{\intextsep}{4pt}
\begin{wrapfigure}{l}{1.7in}
%\begin{figure} 
 \begin{center}
    \includegraphics[width=1.6in]{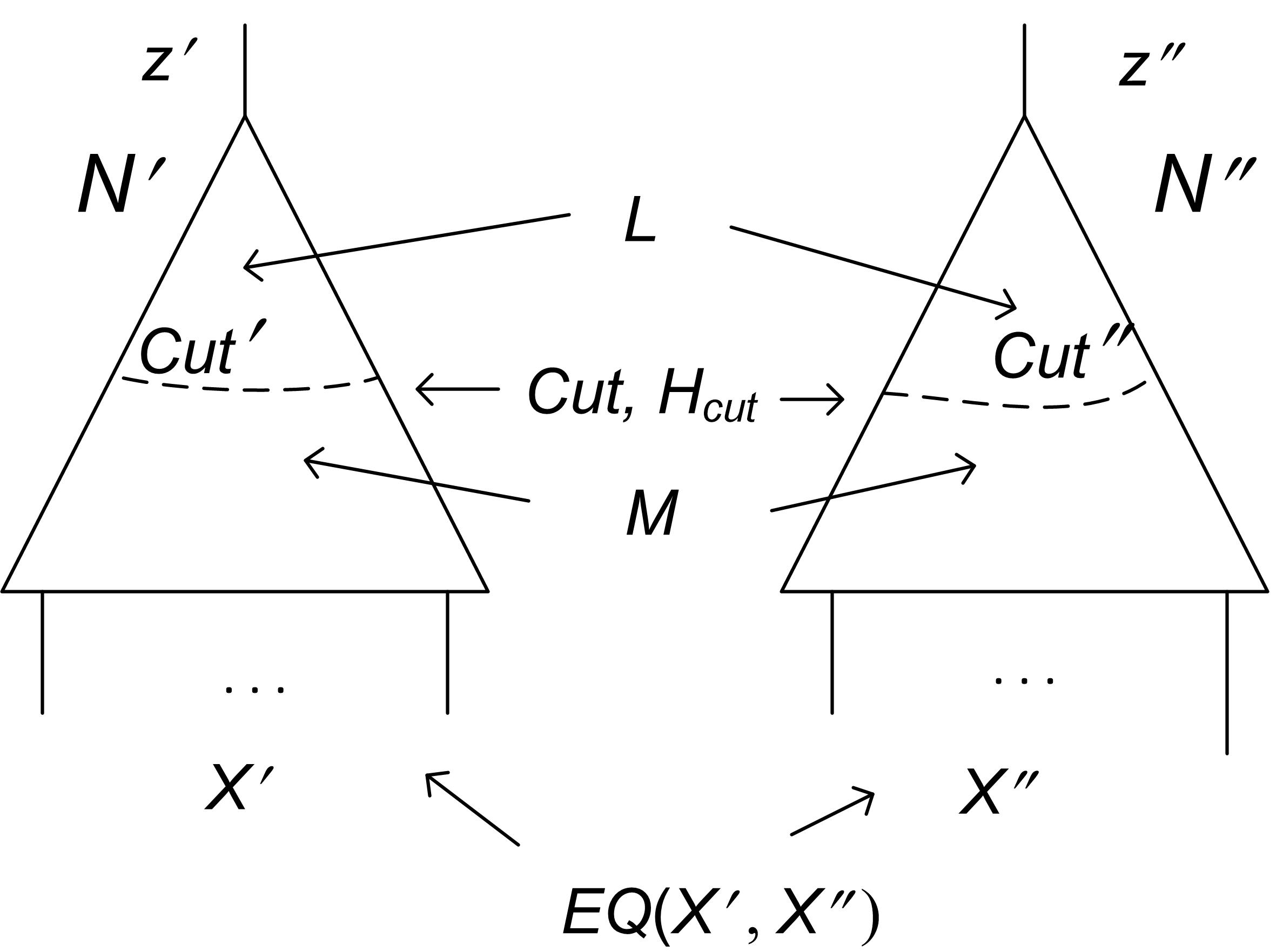}
  \end{center}
\vspace{-10pt}
\caption{Replacing/Separa-ting boundary formula \Sub{H}{cut}}
%\vspace{-5pt}
\label{fig:bbnd_form}
%\end{figure}
\end{wrapfigure}

In this subsection, we discuss the difference between boundary
formulas and interpolants in the context of EC. Let formulas $F_M$ and
$F_L$ specify the gates located below and above a cut as shown in
Fig.~\ref{fig:bbnd_form}. Then checking the equivalence of $N'$ and
$N''$ comes down to testing the satisfiability of formula $S$ equal to
$\mi{EQ}(X',X'') \wedge F_M \wedge F_L \wedge (z' \not\equiv z'')$.

Below, we contrast two types of relaxation of formula $S$ called
\ti{replacing} and \ti{separating} relaxation.  The former corresponds
to \ti{interpolation} while the latter is the relaxation we studied in
the previous sections. A \ti{replacing} relaxation of $S$ is to drop
the clauses of $\mi{EQ} \wedge F_M$. That is \Sup{S}{rlx} = $F_L
\wedge (z' \not\equiv z'')$.  Let \SUP{H}{cut}{r} be a boundary
formula computed for replacing relaxation. (Superscript $r$ stands for
``replacing''.)  That is \prob{W}{\mi{EQ} \wedge F_M \wedge F_L \wedge
  (z' \not\equiv z'')} $\equiv \SUP{H}{cut}{r} \wedge \prob{W}{F_L
  \wedge (z' \not\equiv z'')}$ where $W$ consists of all the variables
of $S$ but cut variables. Note that \SUP{H}{cut}{r} \ti{replaces} all
clauses depending on variables corresponding to gates below the cut,
hence the name replacing relaxation.  Let $A$ denote formula $\mi{EQ}
\wedge F_M$ and $B$ denote formula $F_L \wedge (z' \not\equiv z'')$.
From Proposition~\ref{prop:interp_as_bf} it follows that if $A
\rightarrow \SUP{H}{cut}{r}$ and $A \wedge B \equiv 0$ then
\SUP{H}{cut}{r} is an \ti{interpolant} of implication $A \rightarrow
\overline{B}$.  So an interpolant can be viewed as a boundary formula
for replacing relaxation.

A \ti{separating} relaxation of $S$ is to drop the clauses of
$\mi{EQ}$.  As we mentioned above, this kind of relaxation has been
the focus of the previous sections. Let \SUP{H}{cut}{s} denote a
boundary formula for separating relaxation. (Superscript $s$
stands for ``separating''.)  Formula \SUP{H}{cut}{s} satisfies
\PRob{W}{\mi{EQ} \wedge F_M \wedge F_L \wedge (z' \not\equiv z'')}
$\equiv \SUP{H}{cut}{s} \wedge$ \PRob{W}{F_M \wedge F_L \wedge (z'
  \not\equiv z'')}. Note that adding formula \SUP{H}{cut}{s} \ti{separates}
input variables $X'$ and $X''$ of $N'$ and $N''$ by making formula
$\mi{EQ(X',X'')}$ redundant, hence the name separating relaxation.  We
will refer to \SUP{H}{cut}{r} and \SUP{H}{cut}{s} as replacing and
separating boundary formulas respectively.

Let us assume for the sake of simplicity that a replacing boundary
formula \SUP{H}{cut}{r} is an interpolant i.e. it is implied by
$\mi{EQ} \wedge F_M$.  We will also assume that a separating boundary
formula \SUP{H}{cut}{s} satisfies the condition of
Proposition~\ref{prop:bf_by_pqe} and hence is implied by $\mi{EQ}
\wedge F_M$ as well. An obvious difference between \SUP{H}{cut}{r} and
\SUP{H}{cut}{s} is as follows. Adding \SUP{H}{cut}{s} to formula
\PRob{W}{\mi{EQ} \wedge F_M \wedge F_L \wedge (z' \not\equiv z'')}
makes redundant only a \ti{subset} of clauses that is made redundant
after adding \SUP{H}{cut}{r}.  The fact that adding \SUP{H}{cut}{r}
has to make redundant both clauses of $EQ$ and $F_M$ creates the
following problem with using interpolants for equivalence checking. On
the one hand, since \SUP{H}{cut}{r} is implied by $\mi{EQ} \wedge
F_M$, the former can be obtained by resolving clauses of the latter
i.e. without looking at the part of $N'$ and $N''$ above the cut. On
the other hand, proving that \SUP{H}{cut}{r} is indeed an interpolant,
in general, requires checking that $\SUP{H}{cut}{r} \wedge F_L \wedge
(z' \not\equiv z'') \equiv 0$ and hence needs the knowledge of the
part of $N'$ and $N''$ above the cut.

Informally, the problem above means that one cannot build a small
interpolant \SUP{H}{cut}{r} using only clauses of $\mi{EQ} \wedge
F_M$. By contrast, one can construct a small separating boundary
formula without any knowledge of formula $F_L$. Let us consider the
following simple example. Suppose that the cut of
Fig.~\ref{fig:bbnd_form} is an equivalence cut. That is for for every
cut point of $N'$ there is a functionally equivalent cut point of
$N''$ and vice versa. From Corollary~\ref{cor:eq_cut} it follows that
formula $\mi{EQ}(\mi{Cut'},\mi{Cut''})$ is a separating boundary
formula.  (Here $\mi{Cut'}$ and $\mi{Cut''}$ specify the cut points of
$N'$ and $''$ respectively.)  This fact can be established from
formula $\mi{EQ} \wedge F_M$ \ti{alone}.  However, whether
$\mi{EQ}(\mi{Cut'},\mi{Cut''})$ is an interpolant of implication $A
\rightarrow \overline{B}$ (where $A = \mi{EQ}(X',X'') \wedge F_M$ and
$B = F_L \wedge (z' \not\equiv z'')$) \ti{totally depends on formula}
$F_L$ i.e. on the part of $N'$ and $N''$ above the cut.

\vspace{-10pt}
\section{Conclusions}
\label{sec:conclusions}
We introduced a new framework for Equivalence Checking (EC) based on
Lo-gic Relaxation (LoR). The appeal of applying LoR to EC is
twofold. First, EC by LoR provides a powerful method for generating
proofs of equivalence by induction. Second, LoR gives a framework for
proving inequivalence without generating a counterexample.  The idea
of LoR is quite general and can be applied beyond EC.  LoR is enabled
by a technique called partial quantifier elimination and the
performance of the former strongly depends on that of the latter.  So
building efficient algorithms of partial quantifier elimination is of
great importance.

\section*{Acknowledgment}
I would like to thank Harsh Raju Chamarthi for reading the first
version of this paper.  My special thanks go to Mitesh Jain who has
read several versions of this paper and made detailed and valuable
comments.  This research was supported in part by NSF grants
CCF-1117184 and CCF-1319580.

\bibliographystyle{plain}
\bibliography{short_sat,local}
%\clearpage
%\clearpage
\vspace{15pt}
\appendix
%\appendices
\noindent{\large \tb{Appendix}}
\section{Proofs Of Propositions}
\label{app:proofs}
%
%  Proposition: correctness of EC by LoR
%
\setcounter{proposition}{0}
\begin{proposition}
%\label{prop:ec_by_lor}
Let $H(z',z'')$ be a formula such that $\prob{W}{\mi{EQ} \wedge
  \Sup{G}{rlx}} \equiv$ $H \wedge \prob{W}{\Sup{G}{rlx}}$ where $W =
X' \cup X'' \cup Y' \cup Y''$. Then formula $G \wedge (z' \not\equiv z'')$
is equisatisfiable with $H \wedge \Sup{G}{rlx} \wedge (z' \not\equiv z'')$.
\end{proposition}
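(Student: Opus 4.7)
The plan is to prove the two directions of equisatisfiability by unpacking the hypothesized equivalence $\exists W[\mi{EQ} \wedge \Sup{G}{rlx}] \equiv H \wedge \exists W[\Sup{G}{rlx}]$ into statements about witnesses on the free variables $z', z''$. Note first that $G = \mi{EQ} \wedge \Sup{G}{rlx}$ by definition, and that all variables of $G$ other than $z'$ and $z''$ lie in $W$. So both sides of the hypothesized equivalence are formulas purely in $z', z''$, and we may reason about assignments to $z', z''$ directly.

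For the forward direction, suppose $G \wedge (z' \not\equiv z'')$ is satisfied by some total assignment $(\pnt{w}, b', b'')$ with $b' \neq b''$. Then $(\pnt{w}, b', b'')$ satisfies $\mi{EQ} \wedge \Sup{G}{rlx}$, so $(b', b'')$ satisfies $\exists W[\mi{EQ} \wedge \Sup{G}{rlx}]$. By hypothesis this equals $H \wedge \exists W[\Sup{G}{rlx}]$, so in particular $H(b', b'') = 1$. Since $G \rightarrow \Sup{G}{rlx}$, the same assignment $(\pnt{w}, b', b'')$ still satisfies $\Sup{G}{rlx}$, and therefore it satisfies $H \wedge \Sup{G}{rlx} \wedge (z' \not\equiv z'')$.

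For the backward direction, suppose $H \wedge \Sup{G}{rlx} \wedge (z' \not\equiv z'')$ is satisfied by $(\pnt{w}, b', b'')$ with $b' \neq b''$. Then $H(b', b'') = 1$ and $(\pnt{w}, b', b'')$ witnesses that $(b', b'')$ satisfies $\exists W[\Sup{G}{rlx}]$. Hence $(b', b'')$ satisfies $H \wedge \exists W[\Sup{G}{rlx}]$, and by the hypothesis it satisfies $\exists W[\mi{EQ} \wedge \Sup{G}{rlx}]$. So there exists a possibly different witness $\pnt{w}^*$ with $\mi{EQ} \wedge \Sup{G}{rlx}$ true at $(\pnt{w}^*, b', b'')$, and then $(\pnt{w}^*, b', b'')$ satisfies $G \wedge (z' \not\equiv z'')$.

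There is no real obstacle beyond keeping the two witnesses straight: the backward direction is the one that genuinely uses the strength of the PQE equivalence, because the $W$-witness satisfying $\Sup{G}{rlx}$ need not satisfy $\mi{EQ}$, and the hypothesis is exactly what lets us swap it out for a different $W$-witness satisfying $\mi{EQ} \wedge \Sup{G}{rlx}$ while preserving the output values $(b', b'')$. This illustrates cleanly why the proposition is an \emph{equisatisfiability}, not a logical equivalence, statement: the two formulas agree on which $(z', z'')$ values are admissible, but generally disagree on which assignments to the internal variables realize them.
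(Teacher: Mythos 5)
Your proof is correct, and it takes a different route from the paper's. The paper proves this proposition as a two-line corollary of its Propositions~2 and~3: Proposition~2 shows that any $H$ satisfying the stated PQE equation is a \emph{boundary formula} for the cut $\{z',z''\}$ in the sense of Definition~1, and Proposition~3 then gives equisatisfiability for an arbitrary boundary formula on an arbitrary cut. The substance of the paper's argument therefore lives in the proof of Proposition~3, whose ``right sat $\rightarrow$ left sat'' direction needs a circuit-structural step: the extension $\pnt{p^*}$ of the cut assignment to a satisfying assignment of $G$ must be shown to still falsify $z'\equiv z''$, which the paper argues by noting that the cut variables determine all consistent values above the cut. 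You instead specialize immediately to the output cut, observe that $z'$ and $z''$ are exactly the free variables of the quantified equivalence, and do the witness-swapping directly; the preservation of $(z'\not\equiv z'')$ is then automatic because $b',b''$ are held fixed while only the $W$-witness changes. Your approach buys a shorter, self-contained, purely logical proof of this particular proposition (and your closing remark correctly isolates where the hypothesis is genuinely used and why the result is equisatisfiability rather than equivalence); the paper's approach buys generality, since the same two lemmas also cover intermediate cuts, which the algorithm of Section~5 needs.
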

\begin{proof}
A proof of this proposition follows from
Propositions~\ref{prop:bf_by_pqe}~and~\ref{prop:bf_prop} below.
Proposition~\ref{prop:bf_by_pqe} entails that $H(z',z'')$ is a
boundary formula. From Proposition~\ref{prop:bf_prop} it follows that
$G \wedge (z' \not\equiv z'')$ is equisatisfiable with $H \wedge
\Sup{G}{rlx} \wedge (z' \not\equiv z'')$.
\end{proof}

%
% Proposition: building a boundary formula by PQE
%
\begin{proposition}
%\label{prop:bf_by_pqe}
Let \Sub{H}{cut} be a formula depending only on variables of a cut.
Let \Sub{H}{cut} satisfy \pqe{W}{\mi{EQ}}{\Sub{H}{cut}}{F_M}.  Here
$W$ is the set of variables of $F_M$ minus those of the cut.  Then
\Sub{H}{cut} is a boundary formula.
\end{proposition}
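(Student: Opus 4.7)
The plan is to verify the two clauses of Definition~\ref{def:bnd_form} directly from the PQE identity $\exists W[\mathit{EQ}\wedge F_M] \equiv H_{\mathrm{cut}} \wedge \exists W[F_M]$, using the fact that $F_M$ is a sub-conjunct of both $G$ and $G^{\mathrm{rlx}}$ and that $W$ contains exactly the non-cut variables occurring in $F_M$ (in particular $X' \cup X'' \subseteq W$, since the inputs lie below every cut).

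For clause (a), I would argue as follows. Since $G = \mathit{EQ}\wedge F_{N'} \wedge F_{N''}$ and $F_M$ is built from clauses of $F_{N'}\wedge F_{N''}$, we have $G \rightarrow \mathit{EQ}\wedge F_M$. Trivially $\mathit{EQ}\wedge F_M \rightarrow \exists W[\mathit{EQ}\wedge F_M]$, and the PQE hypothesis rewrites the right-hand side as $H_{\mathrm{cut}} \wedge \exists W[F_M]$, which in turn implies $H_{\mathrm{cut}}$. Chaining these implications yields $G \rightarrow H_{\mathrm{cut}}$.

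For clause (b), fix a cut assignment $\boldsymbol{q}$ that extends to a satisfying assignment of $G^{\mathrm{rlx}} = F_M \wedge F_L$ (where $F_L$ specifies the gates above the cut), but admits no extension satisfying $G$. Restricting the $G^{\mathrm{rlx}}$-extension to $W$ gives an extension of $\boldsymbol{q}$ on $W$ satisfying $F_M$, so $\boldsymbol{q} \models \exists W[F_M]$. Conversely, suppose for contradiction that $\boldsymbol{q} \models \exists W[\mathit{EQ}\wedge F_M]$. Then there is an assignment $\boldsymbol{w}$ to $W$ with $(\boldsymbol{q},\boldsymbol{w}) \models \mathit{EQ}\wedge F_M$. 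Because the cut separates $M$ from the gates above it and those gates are deterministic functions of the cut values, $\boldsymbol{q}$ has a unique extension $\boldsymbol{u}$ on the variables of $F_L \setminus \mathit{Cut}$ that satisfies $F_L$. Combining $(\boldsymbol{q},\boldsymbol{w},\boldsymbol{u})$ produces a full assignment satisfying $\mathit{EQ}\wedge F_M \wedge F_L = G$, contradicting the hypothesis. Hence $\boldsymbol{q} \not\models \exists W[\mathit{EQ}\wedge F_M]$. Now apply the PQE identity: $\boldsymbol{q} \not\models H_{\mathrm{cut}} \wedge \exists W[F_M]$, yet $\boldsymbol{q} \models \exists W[F_M]$, forcing $H_{\mathrm{cut}}(\boldsymbol{q}) = 0$.

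The only subtle point — and the one I would flag as the main obstacle — is the deterministic-extension step used to convert the ``cannot extend to satisfy $G$'' hypothesis (a statement about \emph{all} variables) into ``does not satisfy $\exists W[\mathit{EQ}\wedge F_M]$'' (a statement about the variables of $F_M$). This step relies on the conventions, established earlier, that a cut separates the circuit into acyclic upper and lower parts and that $F_{N'}\wedge F_{N''}$ is partitioned cleanly into $F_M$ (below) and $F_L$ (above) with no shared variables other than those of the cut. Once this is spelled out, the rest is a direct application of the PQE identity.
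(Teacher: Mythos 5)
Your proof is correct and follows essentially the same route as the paper's: both establish (a) by chaining $G \rightarrow \mi{EQ}\wedge F_M \rightarrow \Sub{H}{cut}$ via the PQE identity, and both establish (b) by showing $\pnt{q}$ cannot satisfy $\prob{W}{\mi{EQ}\wedge F_M}$ (using the execution-trace/deterministic-extension argument through the gates above the cut) while it does satisfy $\prob{W}{F_M}$, forcing $\Sub{H}{cut}(\pnt{q})=0$. The ``subtle point'' you flag is exactly the step the paper handles with its parenthetical remark about extending $\pnt{p}$ via the execution trace of circuit $L$.
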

\begin{proof}
\pqe{W}{\mi{EQ}}{\Sub{H}{cut}}{F_M} entails $\mi{EQ} \wedge F_M
\rightarrow \Sub{H}{cut}$.  Let $L$ be the subcircuit consisting of the
gates of $N'$ and $N''$ located above the cut. Let $F_L$ be a formula
specifying $L$. Since $G = \mi{EQ} \wedge F_M \wedge F_L$, then
\Impl{G}{\Sub{H}{cut}} and so condition a) of
Definition~\ref{def:bnd_form} is met. Let us prove that condition b)
is met as well.  Let \pnt{q} be a cut assignment that can be extended
to satisfy \Sup{G}{rlx} but not $G$. This means that \pnt{q} cannot be
extended to an assignment \pnt{p} satisfying $\mi{EQ} \wedge F_M$
either. (Otherwise, one could easily extend \pnt{p} to an assignment
satisfying $\mi{EQ} \wedge F_M \wedge F_L$ and hence $G$ by using the
values of an execution trace computed for circuit $L$. This trace
describes computation of output values of $L$ when its input variables
i.e. the cut variables are assigned as in \pnt{q}.)  So
\prob{W}{\mi{EQ} \wedge F_M}=0 under assignment \pnt{q}.  This means
that $\Sub{H}{cut} \wedge \prob{W}{F_M}=0$ under assignment
\pnt{q}. Taking into account that \pnt{q} can be extended to an
assignment satisfying \Sup{G}{rlx} and hence $F_M$, one has to
conclude that $\Sub{H}{cut}(\pnt{q})=0$.
\end{proof}
%
%
% Proposition: property of bf
%
\begin{proposition}
%\label{prop:bf_prop}
Let \Sub{H}{cut} be a boundary formula with respect to a cut.
Then  $G \wedge (z' \not\equiv z'')$ is
equisatisfiable with $\Sub{H}{cut}  \wedge \Sup{G}{rlx} \wedge (z' \not\equiv z'')$.
\end{proposition}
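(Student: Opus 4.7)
The plan is to prove equisatisfiability by establishing both implications and exploiting two facts: that a boundary formula is logically implied by $G$ (condition (a)), and that its contrapositive of (b) lets us lift a satisfying assignment of the relaxed formula back to one of $G$, provided we agree on cut values.

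For the forward direction, I would take any assignment $\pnt{p}$ satisfying $G \wedge (z' \not\equiv z'')$. Since $G = \mi{EQ} \wedge F_{N'} \wedge F_{N''}$ strictly strengthens $\Sup{G}{rlx} = F_{N'} \wedge F_{N''}$, clearly $\pnt{p}$ satisfies $\Sup{G}{rlx} \wedge (z' \not\equiv z'')$. Condition (a) of Definition~\ref{def:bnd_form} gives $G \rightarrow \Sub{H}{cut}$, so $\pnt{p}$ also satisfies $\Sub{H}{cut}$. Hence $\pnt{p}$ is a satisfying assignment of the right-hand formula.

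For the reverse direction, suppose $\pnt{p}$ satisfies $\Sub{H}{cut} \wedge \Sup{G}{rlx} \wedge (z' \not\equiv z'')$. Let $\pnt{q}$ be the projection of $\pnt{p}$ onto the cut variables. Since $\pnt{p}$ satisfies $\Sup{G}{rlx}$, the cut assignment $\pnt{q}$ is extendible to an assignment satisfying $\Sup{G}{rlx}$. By the contrapositive of condition (b), since $\Sub{H}{cut}(\pnt{q}) = 1$, $\pnt{q}$ must also be extendible to some assignment $\pnt{p}^*$ satisfying $G$. Now the key step: I must check that $\pnt{p}^*$ still has $z' \not\equiv z''$. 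Because the cut lies between the inputs and the outputs in $N'$ and $N''$, the values of all variables above the cut (including $z'$ and $z''$) are uniquely determined by $\pnt{q}$ through the deterministic gate functions encoded in $F_{N'} \wedge F_{N''}$. Both $\pnt{p}$ and $\pnt{p}^*$ satisfy $F_{N'} \wedge F_{N''}$ and agree on $\pnt{q}$, so they must assign the same values to $z'$ and $z''$. Thus $\pnt{p}^*$ inherits $z' \not\equiv z''$ from $\pnt{p}$ and satisfies $G \wedge (z' \not\equiv z'')$.

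The main obstacle is the last step of the reverse direction: one must argue carefully that even though the boundary formula definition gives no explicit handle on the output literals when the cut lies strictly below $z', z''$, the determinism of the circuit above the cut forces both assignments to agree on $z'$ and $z''$. Once this observation is in place, everything else is a direct appeal to the two clauses of Definition~\ref{def:bnd_form}.
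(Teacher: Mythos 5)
Your proof is correct and follows essentially the same route as the paper's: the forward direction uses condition (a) of Definition~\ref{def:bnd_form} plus the fact that \Sup{G}{rlx} is a subformula of $G$, and the reverse direction projects onto the cut, applies the contrapositive of condition (b) to obtain an extension \pnt{p^*} satisfying $G$, and then uses the determinism of the gates above the cut to conclude that \pnt{p^*} agrees with \pnt{p} on $z',z''$. If anything, you are slightly more explicit than the paper in noting that \pnt{q} is extendible to satisfy \Sup{G}{rlx} (witnessed by \pnt{p}) before invoking the contrapositive of (b).
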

\begin{proof}
Let us show that the satisfiability of the left formula i.e.  $G
\wedge (z' \not\equiv z'')$ implies that of the right formula
i.e. $\Sub{H}{cut} \wedge \Sup{G}{rlx} \wedge (z' \not\equiv z'')$ and vice
versa.

\vspace{4pt} \tb{Left sat. $\rightarrow$ Right sat.}  Let \pnt{p} be
an assignment satisfying $G \wedge (z' \not\equiv z'')$. From
Definition~\ref{def:bnd_form} it follows that $G$ implies \Sub{H}{cut}
and so \Sub{H}{cut} is satisfied by \pnt{p}. Since \Sup{G}{rlx} is a
subformula of $G$, assignment \pnt{p} satisfies \Sup{G}{rlx} as
well. Hence \pnt{p} satisfies $\Sub{H}{cut} \wedge \Sup{G}{rlx} \wedge
(z' \not\equiv z'')$.

\tb{Right sat. $\rightarrow$ Left sat.} Let \pnt{p} be an assignment
satisfying $\Sub{H}{cut} \wedge \Sup{G}{rlx} \wedge (z' \not\equiv
z'')$. Let \pnt{q} be the subset of \pnt{p} consisting of the
assignments to the cut variables. Since \Sub{H}{cut}(\pnt{q})=1,
Definition~\ref{def:bnd_form} entails that \pnt{q} can be extended to
an assignment \pnt{p^*} satisfying formula $G$. Since the variables
assigned in \pnt{q} form a cut of circuits $N'$ and $N''$, the
consistent assignments to the variables of $N'$ and $N''$ located
above the cut are identical in \pnt{p} and \pnt{p^*}. This means that
\pnt{p^*} satisfies $(z' \not\equiv z'')$ and hence formula $G \wedge (z'
\not\equiv z'')$.
\end{proof}

%
% Boundary formulas for structurally similar circuits
%
\begin{proposition}
%\label{prop:ssim_circs}
Let $\mi{Cut'},\mi{Cut}''$ specify the outputs of circuits $M'$ and
$M''$ of Fig.~\ref{fig:bnd_form} respectively. Assume that for every
variable $v'$ of $\mi{Cut'}$ there is a set $S(v') =
\s{v''_{i_1},\dots,v''_{i_k}}$ of variables of $\mi{Cut''}$ that have
the following property.  Knowing the values of variables of $S(v')$
produced in $N''$ under input \pnt{x} one can determine the value of
$v'$ of $N'$ under the same input \pnt{x}. We assume here that $S(v')$
has this property for every possible input \pnt{x}. Let
$\mi{Max}(S(v'))$ be the size of the largest $S(v')$ over variables of
$\mi{Cut'}$. Then there is a boundary formula \Sub{H}{cut} where every
clause has at most $\mi{Max}(S(v'))+1$ literals.
\end{proposition}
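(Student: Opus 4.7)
The plan is to build $\Sub{H}{cut}$ explicitly by encoding, for each cut variable $v' \in \mi{Cut'}$, its functional dependence on the variables of $S(v')$, and then to verify that the resulting CNF satisfies both clauses of Definition~\ref{def:bnd_form}.

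First, by hypothesis, for every $v' \in \mi{Cut'}$ and every input \pnt{x}, the values the variables of $S(v') = \{v''_{i_1},\dots,v''_{i_k}\}$ take in $N''$ under \pnt{x} determine the value $v'$ takes in $N'$ under the same \pnt{x}. So there is a partial Boolean function on $\{0,1\}^{|S(v')|}$, defined on all assignments reachable by $N''$, giving the value of $v'$; extend it arbitrarily to a total function $f_{v'}$. Write the standard CNF expressing $v' \equiv f_{v'}(S(v'))$: it consists of at most $2^{|S(v')|}$ clauses, each of which mentions only $v'$ and the variables of $S(v')$, so every clause has at most $|S(v')|+1 \le \mi{Max}(S(v'))+1$ literals. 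Let $\Sub{H}{cut}$ be the conjunction of these CNFs over all $v' \in \mi{Cut'}$; note $\Sub{H}{cut}$ depends only on cut variables.

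To check condition (a), take any assignment satisfying $G = \mi{EQ}(X',X'') \wedge F_{N'} \wedge F_{N''}$. Because $\mi{EQ}$ holds, there is a single input \pnt{x} consistent with both circuits; the values of $S(v')$ on the cut are those produced by $N''$ under \pnt{x} (hence reachable), and by the hypothesis the value of $v'$ on the cut equals $f_{v'}$ applied to them. So each equivalence $v' \equiv f_{v'}(S(v'))$ holds, and $\Sub{H}{cut}$ is satisfied.

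The main work is condition (b). Let \pnt{q} = (\pnt{q'},\pnt{q''}) be a cut assignment that extends to satisfy \Sup{G}{rlx} = $F_{N'} \wedge F_{N''}$; thus there exist \pnt{x'},\pnt{x''} with $N'$ producing \pnt{q'} on input \pnt{x'} and $N''$ producing \pnt{q''} on input \pnt{x''}. Suppose for contradiction that $\Sub{H}{cut}(\pnt{q})=1$. Then for every $v' \in \mi{Cut'}$, $\pnt{q'}[v'] = f_{v'}(\pnt{q''}[S(v')])$. Since \pnt{q''} is reachable in $N''$ via \pnt{x''}, the hypothesis guarantees that the value of $v'$ produced by $N'$ under input \pnt{x''} equals $f_{v'}(\pnt{q''}[S(v')]) = \pnt{q'}[v']$. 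Hence the single input \pnt{x''} drives $N'$ to \pnt{q'} and $N''$ to \pnt{q''}, so \pnt{q} extends to an assignment satisfying $\mi{EQ} \wedge F_{N'} \wedge F_{N''} = G$, contradicting the hypothesis of (b). Therefore $\Sub{H}{cut}(\pnt{q})=0$, as required.

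The only subtle point is making the use of the hypothesis in (b) crisp: one must exploit the asymmetry between $N'$ and $N''$ in the statement (the $S(v')$ lives in $N''$, and $v'$ lives in $N'$) to use \pnt{x''} as the common input rather than \pnt{x'}. The clause-size bound itself then falls out mechanically from the standard CNF encoding of a Boolean function of $|S(v')|$ variables.
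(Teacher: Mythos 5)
Your proof is correct and follows essentially the same route as the paper's: both build $\Sub{H}{cut}$ from clauses of length at most $\mi{Max}(S(v'))+1$ encoding the functional dependence of each $v'$ on $S(v')$, and both hinge on the same key step of re-driving \emph{both} circuits with the single input \pnt{x''} to show that a cut assignment consistent with these dependencies extends to satisfy $G$. The only presentational difference is that you construct the whole formula up front and verify conditions a) and b) of Definition~\ref{def:bnd_form} directly, whereas the paper argues clause-by-clause, exhibiting for each ``bad'' cut assignment a falsified clause implied by $G$.
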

\begin{proof}
Let \pnt{q} be an assignment to the cut variables that can be extended
to satisfy formula \Sup{G}{rlx} but not formula $G$. To prove the
proposition at hand, one needs to show that there is a clause $C$
consisting of cut variables such that
\begin{itemize}
\item $C$ is implied by formula $G$
\item $C(\pnt{q}) = 0$
\item $C$ consists of at most $\mi{Max}(S(v'))$ literals
\end{itemize}
(Using clauses satisfying the three conditions above one can build 
a required boundary formula \Sub{H}{cut}.)

Let \pnt{p} be an assignment satisfying formula \Sup{G}{rlx} that is
obtained by extending \pnt{q}. Let \pnt{x'} and \pnt{x''} be the
assignments of \pnt{p} to variables of $X'$ and $X''$
respectively. Note that $\pnt{x'} \neq \pnt{x''}$ (otherwise \pnt{p}
would satisfy formula $G$ as well).  Cut assignment \pnt{q} can be
represented as (\pnt{q'},\pnt{q''}) where \pnt{q'} and \pnt{q''} are
assignments of \pnt{q} to $\mi{Cut'}$ and $\mi{Cut''}$ respectively.
Assignment \pnt{q'} (respectively \pnt{q''}) is produced by circuit
$M'$ (respectively $M''$) under input \pnt{x'} (respectively
\pnt{x''}).

Let $v'$ be a variable of $\mi{Cut}'$. The value of $v'$ is uniquely
specified by assignment \pnt{q''} to $S(v')$. So the value of every
variable of $\mi{Cut}'$ is specified by assignment \pnt{q''} to
$\mi{Cut}''$. Denote by \pnt{s'} the assignment to $\mi{Cut'}$
specified by \pnt{q''}. Let us show that $\pnt{s'} \neq \pnt{q'}$.
Assume the contrary i.e. \pnt{s'} = \pnt{q'} and show that then one
can extend \pnt{q} to an assignment \pnt{p^*} satisfying formula $G$
and so we have a contradiction.  Assignment \pnt{p^*} is constructed
as follows. The variables below the cut are assigned in \pnt{p^*} as
in the execution trace obtained by applying \pnt{x''} to $M'$ and
$M''$. Note that by assumption, applying input \pnt{x''} to $M'$ will
produce cut assignment \pnt{s'} equal to \pnt{q'}. The variables above
the cut are assigned in \pnt{p^*} as in \pnt{p}. Since \pnt{p}
satisfies \Sup{G}{rlx} and $X'$ and $X''$ are assigned the same input
\pnt{x''} in \pnt{p^*}, the latter satisfies $G$. Besides, the cut
assignment specified by \pnt{p^*} is \pnt{q} i.e. the same as the one
specified by \pnt{p}.

Since $\pnt{q'} \neq \pnt{s'}$, there is a variable $v'$ of
$\mi{Cut'}$ that is assigned in \pnt{q'} inconsistently with the
assignment of \pnt{q''} to the variables of $S(v')$. Let $C''$ be the
clause of variables of $S(v')$ falsified by \pnt{q''}. Let $l(v')$ be
the literal of $v'$ falsified by \pnt{q'}.  Then clause $l(v') \vee
C''$ is falsified by \pnt{q}.  The fact that assignment \pnt{q'} to
$S(v')$ determines the value of $v'$ means that clause $l(v') \vee
C''$ is implied by formula $F_{M'} \wedge F_{M''}$.  Hence $l(v') \vee
C''$ is implied by $G$. Finally, the number of literals in $l(v') \vee
C''$ is $|S(v')| + 1$. So clause $l(v') \vee C''$ satisfies the three
conditions above.
\end{proof}
%
% Proposition: Justification of induction in EC by LoR
%
\begin{proposition}
%\label{prop:part}
Let $W_i$ where $i > 0$ be the set of variables of $F_{M_i}$ minus
those of $\mi{Cut}_i$.  Let $H_{i-1}$ where $i > 1$ be a boundary
formula such that $\prob{W_{i-1}}{H_0 \wedge F_{M_{i-1}}} \equiv
H_{i-1} \wedge \exists{W_{i-1}} [F_{M_{i-1}}]$.  Let
\pqe{W_i}{H_{i-1}}{H_i}{F_{M_i}} hold. Then \PRob{W_i}{H_0 \wedge
  F_{M_i}} $\equiv$ $H_i \wedge \prob{W_i}{F_{M_i}}$ holds. (So $H_i$
is a boundary formula due to Proposition~\ref{prop:bf_by_pqe}.)
\end{proposition}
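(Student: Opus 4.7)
The plan is to chain the two given quantifier equivalences by rearranging the outer quantifier in $\exists W_i[H_0 \wedge F_{M_i}]$ so that the hypothesis on $H_{i-1}$ becomes applicable. The key structural observation is that $M_i$ decomposes into $M_{i-1}$ plus the strip of gates between $\mi{Cut}_{i-1}$ and $\mi{Cut}_i$, so $F_{M_i} = F_{M_{i-1}} \wedge F_{\mi{btw}}$, where $F_{\mi{btw}}$ involves only variables of $\mi{Cut}_{i-1}$, the internal variables strictly between the two cuts, and $\mi{Cut}_i$. In particular $F_{\mi{btw}}$ does not mention any variable of $W_{i-1}$ (which consists of inputs plus internal variables strictly below $\mi{Cut}_{i-1}$), and $\mi{Cut}_{i-1} \subseteq W_i \setminus W_{i-1}$.

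First I would split $\exists W_i$ as $\exists (W_i \setminus W_{i-1})\,\exists W_{i-1}$ on the left hand side. Since $F_{\mi{btw}}$ does not depend on $W_{i-1}$, it can be pulled out of the inner quantifier, leaving $\exists W_{i-1}[H_0 \wedge F_{M_{i-1}}]$ as the kernel. The hypothesis on $H_{i-1}$ rewrites this kernel to $H_{i-1} \wedge \exists W_{i-1}[F_{M_{i-1}}]$; because $H_{i-1}$ depends only on $\mi{Cut}_{i-1}$, which is disjoint from $W_{i-1}$, the conjunct $H_{i-1}$ together with $F_{\mi{btw}}$ can be re-absorbed under the inner quantifier. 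Merging the two existentials then yields $\exists W_i[H_{i-1} \wedge F_{M_{i-1}} \wedge F_{\mi{btw}}] = \exists W_i[H_{i-1} \wedge F_{M_i}]$, at which point the PQE hypothesis \pqe{W_i}{H_{i-1}}{H_i}{F_{M_i}} immediately turns this into $H_i \wedge \exists W_i[F_{M_i}]$, which is the desired right hand side. Proposition~\ref{prop:bf_by_pqe} then certifies that $H_i$ is a boundary formula.

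The main obstacle I expect is bookkeeping: every time a subformula crosses $\exists W_{i-1}$ I must explicitly verify that it shares no variable with $W_{i-1}$. Specifically, I need to check that $H_0 = \mi{EQ}(X',X'')$ depends only on inputs, which lie inside $W_{i-1}$, so $H_0$ stays under the inner quantifier throughout; that $H_{i-1}$ genuinely depends only on $\mi{Cut}_{i-1}$; and that $F_{\mi{btw}}$ involves no variable strictly below $\mi{Cut}_{i-1}$. Once these scoping facts are pinned down, the remainder is a routine chain of existential-quantifier manipulations justified by the two supplied equivalences.
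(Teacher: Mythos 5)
Your proposal is correct and follows essentially the same route as the paper's own proof: decompose $F_{M_i}$ into $F_{M_{i-1}}$ plus the inter-cut strip, split $\exists W_i$ into $\exists (W_i\setminus W_{i-1})\exists W_{i-1}$, pull the strip out of the inner quantifier, apply the hypothesis on $H_{i-1}$, re-absorb, and finish with the PQE hypothesis. The scoping checks you flag (that the strip and $H_{i-1}$ avoid $W_{i-1}$, and that $H_0$ stays inside it) are exactly the facts the paper's proof relies on implicitly.
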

\begin{proof}
Let $\phi$ denote formula $\prob{W_i}{H_0 \wedge F_{M_i}}$. Let
$F_{i-1,i}$ be the set of clauses equal to $F_{M_i} \setminus
F_{M_{i-1}}$. Formula $\phi$ can be represented as
\Prob{W_{i-1}}{W_{i-i,i}}{H_0 \wedge F_{M_{i-1}}\wedge F_{i-1,i}}
where $W_{i-1,i} = W_i \setminus W_{i-1}$. Taking into account that
formula $F_{i-1,i}$ does not depend on variables of $W_{i-1}$, one can
rewrite formula $\phi$ as $\exists{W_{i-1,i}}[F_{i-1,i} \wedge
  \prob{W_{i-1}}{H_0 \wedge F_{M_{i-1}}}]$.  Using the assumption
imposed on $H_{i-1}$ by the proposition at hand, one can transform
formula $\phi$ into $\exists{W_{i-1,i}}[F_{i-1,i} \wedge H_{i-1}
  \wedge \prob{W_{i-1}}{F_{M_{i-1}}}]$.  After putting $F_{i-1,i}$ and
$H_{i-1}$ back under the scope of quantifiers, $\phi$ becomes equal to
$\exists{W_{i-1,i}}[ \prob{W_{i-1}}{H_{i-1} \wedge F_{M_{i-1}} \wedge
    F_{i-1,i}}]$ and hence to $\exists{W_i}[H_{i-1} \wedge F_{M_i}]$.
Since \pqe{W_i}{H_{i-1}}{H_i}{F_{M_i}} holds we get that the original
formula $\phi$ equal to \prob{W_i}{H_0 \wedge F_{M_i}} is logically
equivalent to $H_i \wedge \prob{W_i}{F_{M_i}}$.
\end{proof}
%
%  Proposition: Correctness of the general LoR method
%
\begin{proposition}
\label{prop:gen_method}
Let $S(X,Z)$, $\Sup{S}{rlx}(X,Z)$, $E(X,Z)$ and $H(Z)$ be Boolean
formulas where $X,Z$ are non-overlapping sets of variables.  Let $S =
E \wedge \Sup{S}{rlx}$ and $\prob{X}{E \wedge \Sup{S}{rlx}} \equiv H
\wedge \prob{X}{\Sup{S}{rlx}}$ hold.  Then $S$ is equisatisfiable with
$H \wedge \Sup{S}{rlx}$.
\end{proposition}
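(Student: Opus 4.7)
The plan is to reduce the claim to a straightforward chain of logical equivalences obtained by closing both formulas under existential quantification over all their free variables and then using the hypothesis. A Boolean formula $F(U)$ is satisfiable iff $\exists U [F]$ is valid (i.e. logically equivalent to \true), so I would first translate ``$S$ is satisfiable'' into the statement that $\exists Z \exists X [E \wedge \Sup{S}{rlx}]$ is valid, and similarly rewrite ``$H \wedge \Sup{S}{rlx}$ is satisfiable'' as validity of $\exists Z \exists X [H \wedge \Sup{S}{rlx}]$.

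Next I would regroup the quantifiers on the first side, using the fact that $X$ and $Z$ are disjoint, to get $\exists Z [\prob{X}{E \wedge \Sup{S}{rlx}}]$, and then substitute using the assumption $\prob{X}{E \wedge \Sup{S}{rlx}} \equiv H \wedge \prob{X}{\Sup{S}{rlx}}$. This turns the first side into $\exists Z [H \wedge \prob{X}{\Sup{S}{rlx}}]$. The key algebraic step is that $H$ has no free occurrence of any variable in $X$, so it can be pushed back inside the $\exists X$ scope, giving $\exists Z \exists X [H \wedge \Sup{S}{rlx}]$, which matches the target. Reading the chain in either direction establishes both implications at once, hence equisatisfiability.

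There is no real obstacle; the argument is pure quantifier manipulation. The only point deserving an explicit mention in the writeup is the justification for moving $H(Z)$ inside $\exists X$, which is just the standard fact that $\exists X [A \wedge B(X)] \equiv A \wedge \exists X [B(X)]$ when $A$ does not depend on $X$. I would state this once and then present the equivalences as a single displayed derivation, concluding that satisfiability of $S$ and of $H \wedge \Sup{S}{rlx}$ reduce to validity of the same quantified formula.
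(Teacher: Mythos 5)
Your proposal is correct and takes essentially the same route as the paper's proof: the paper likewise observes that the hypothesis, together with the fact that $H(Z)$ does not depend on $X$, gives $\prob{X}{S} \equiv \prob{X}{H \wedge \Sup{S}{rlx}}$, and then concludes equisatisfiability by looking at assignments to $Z$ (your version merely closes over $Z$ as well and phrases the conclusion as validity of the full existential closure).
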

\begin{proof}
By assumptions of the proposition, $\prob{X}{S} \equiv \prob{X}{H
  \wedge \Sup{S}{rlx}}$.  So if formula $S$ is satisfiable, there is
an assignment \pnt{z} to the variables of $Z$ for which \prob{X}{S}
evaluates to 1. Since formula \prob{X}{H \wedge \Sup{S}{rlx}} also
evaluates to 1 for \pnt{z}, formula $H \wedge \Sup{S}{rlx}$ is
satisfiable too.  Similarly, one can show that the satisfiability of
\prob{X}{H \wedge \Sup{S}{rlx}} means that that $S$ is satisfiable
too.
\end{proof}
%
% Proposition: boundary formula in general case
%
\begin{proposition}
\label{prop:gen_bf}
Let formula $S(X,Z)$ be represented as $E(X,Z) \wedge
\Sup{S}{rlx}(X,Z)$ where $X,Z$ are non-overlapping sets of Boolean
variables.  Let \pqe{X}{E}{H}{\Sup{S}{rlx}} hold for a formula $H(Z)$.
Then $H$ is a boundary formula in terms of $Z$ for relaxation
\Sup{S}{rlx} (see Definition~\ref{def:bnd_form}).  That is
\begin{enumerate}[a)]
\vspace{-2pt}
\item $S \rightarrow H$ and
\item for every assignment \pnt{z} to $Z$ that can be extended to satisfy \Sup{S}{rlx} but
not $S$, the value of $H(\pnt{z})$ is 0.
\end{enumerate}
\end{proposition}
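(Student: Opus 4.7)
The plan is to unpack the PQE hypothesis $\exists X [E \wedge \Sup{S}{rlx}] \equiv H \wedge \exists X [\Sup{S}{rlx}]$ and evaluate both sides at the assignments of interest. Both parts (a) and (b) will follow by plugging a concrete assignment to $Z$ into this equivalence and reading off the value of $H$. Since the hypothesis tells us that the two quantified formulas agree as functions of $Z$, the proof is essentially a matter of chasing truth values on $Z$.

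For part (a), I would take an arbitrary satisfying assignment $(\pnt{x},\pnt{z})$ of $S$ and observe that, since $S = E \wedge \Sup{S}{rlx}$, the formula $\exists X[E \wedge \Sup{S}{rlx}]$ evaluates to $1$ under $\pnt{z}$. The PQE equivalence then forces $H(\pnt{z}) \wedge \exists X[\Sup{S}{rlx}]$ to be $1$ at $\pnt{z}$, and in particular $H(\pnt{z}) = 1$. Since this holds for every $\pnt{z}$ that can be extended to satisfy $S$, we conclude $S \rightarrow H$.

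For part (b), let $\pnt{z}$ be any assignment that extends to satisfy $\Sup{S}{rlx}$ but not $S$. The first property gives $\exists X[\Sup{S}{rlx}](\pnt{z}) = 1$, while the second, together with $S = E \wedge \Sup{S}{rlx}$, gives $\exists X[E \wedge \Sup{S}{rlx}](\pnt{z}) = 0$. The PQE equivalence then yields $H(\pnt{z}) \wedge \exists X[\Sup{S}{rlx}](\pnt{z}) = 0$, and since the second conjunct is $1$, we must have $H(\pnt{z}) = 0$, as required.

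There is no real obstacle here: the entire content of the proposition is in the PQE hypothesis, and the two conditions of Definition~\ref{def:bnd_form} are obtained by evaluating that equivalence in the two relevant cases (assignments extendable to $S$, and assignments extendable to $\Sup{S}{rlx}$ but not to $S$). The only mild care needed is to remember that $H$ depends only on $Z$, so ``evaluating $H(\pnt{z})$'' is well-defined independently of any choice of $\pnt{x}$. This also makes clear why the proposition is essentially a reformulation of Proposition~\ref{prop:bf_by_pqe} in the abstract setting of Subsection~\ref{ssec:gen_rlx}.
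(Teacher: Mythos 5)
Your proof is correct and follows essentially the same route as the paper's: both parts are obtained by evaluating the PQE equivalence $\prob{X}{E \wedge \Sup{S}{rlx}} \equiv H \wedge \prob{X}{\Sup{S}{rlx}}$ at the relevant assignments to $Z$ (the paper states part a) more tersely as ``the PQE equivalence entails $E \wedge \Sup{S}{rlx} \rightarrow H$,'' which is exactly the truth-value argument you spell out). No gaps.
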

\begin{proof}
\pqe{X}{E}{H}{\Sup{S}{rlx}} entails $E \wedge \Sup{S}{rlx} \rightarrow
H$. So condition a) is met. Let us show that condition b) holds as
well. Let \pnt{z} be an assignment to $Z$ that can be extended to
satisfy \Sup{S}{rlx} but not $S$. This means that \prob{X}{E \wedge
  \Sup{S}{rlx}} and \prob{X}{\Sup{S}{rlx}} evaluate to 0 and 1
respectively under assignment \pnt{z}. Hence $H(\pnt{z})$ has to be
equal to 0 to preserve \pqe{X}{E}{H}{\Sup{S}{rlx}}.
\end{proof}
%
%  Proposition: interpolants as boundary formulas
%
\begin{proposition}
\label{prop:interp_as_bf}
Let $A(X,Y)$ and $B(Y,Z)$ be formulas where $X,Y,Z$ are
non-overlapping sets of variables. Let $A \wedge B \equiv 0$. Formula
$H(Y)$ is an interpolant of implication \Impl{A}{\overline{B}} iff
\Impl{A}{H} and \pqe{W}{A}{H}{B} where $W = X \cup Z$.
\end{proposition}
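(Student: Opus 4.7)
The plan is to verify both directions of the biconditional by unpacking the definition of an interpolant. Recall that $H(Y)$ being an interpolant for $A \rightarrow \overline{B}$ means three things hold: (i) $A \rightarrow H$, (ii) $H \rightarrow \overline{B}$, i.e.\ $H \wedge B \equiv 0$, and (iii) $H$ depends only on variables shared by $A$ and $B$, which here are exactly the variables of $Y$. The crucial technical fact I will use throughout is that since $H = H(Y)$ and $W = X \cup Z$ is disjoint from $Y$, the formula $H$ can be freely pulled in and out of $\exists W$, so $H \wedge \exists W.[B] \equiv \exists W.[H \wedge B]$.

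For the forward direction, I would assume $H$ is an interpolant. Then (i) immediately gives the first half of the right-hand side. For the PQE equation, I note that by the standing hypothesis $A \wedge B \equiv 0$, so $\prob{W}{A \wedge B} \equiv 0$. On the other side, $H \wedge \prob{W}{B} \equiv \prob{W}{H \wedge B} \equiv \prob{W}{0} \equiv 0$, using condition (ii) of the interpolant and the independence noted above. Hence both sides of the PQE equation are identically false, so they are trivially equivalent.

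For the reverse direction, I would assume $A \rightarrow H$ and the PQE equation \pqe{W}{A}{H}{B}. Condition (i) of the interpolant is given. Condition (iii) is automatic from the signature $H(Y)$. The main content is to derive (ii). From $A \wedge B \equiv 0$ we again get $\prob{W}{A \wedge B} \equiv 0$, and then the PQE equation forces $H \wedge \prob{W}{B} \equiv 0$. Pulling $H$ inside the quantifier as before yields $\prob{W}{H \wedge B} \equiv 0$, and since $H \wedge B$ being unsatisfiable is the same as $H \wedge B \equiv 0$, we conclude $H \rightarrow \overline{B}$.

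There is no serious obstacle here; both directions rest on the same two simple observations, namely that $A \wedge B \equiv 0$ makes the left-hand side of the PQE equation identically $0$, and that $H(Y)$ commutes with $\exists W$. The only point that deserves care is to justify the commuting step cleanly, which comes down to the disjointness of $Y$ and $W$ built into the statement.
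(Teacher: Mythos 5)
Your proof is correct and follows essentially the same route as the paper's: both directions reduce to observing that $A \wedge B \equiv 0$ forces the left-hand side of the PQE equation to be identically false, and that $H(Y)$ commutes with $\exists W$ so the right-hand side is false iff $H \wedge B \equiv 0$. You are in fact slightly more careful than the paper, which leaves the commuting step $H \wedge \prob{W}{B} \equiv \prob{W}{H \wedge B}$ implicit.
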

\begin{proof}
\ti{If part}.  Suppose that \Impl{A}{H} holds and \pqe{W}{A}{H}{B}.
Since \Impl{A}{\overline{B}} holds, then $A \wedge B \equiv 0$ and so
$H \wedge B \equiv 0$. Hence \Impl{H}{\overline{B}} and $H$ is an
interpolant of implication \Impl{A}{\overline{B}}.

\ti{Only if part}. Suppose that $H$ is an interpolant and so
\Impl{A}{H} and \Impl{H}{\overline{B}} hold.  Assume that $\prob{W}{A
  \wedge B} \not\equiv H \wedge \prob{W}{B}$. Since
\Impl{H}{\overline{B}} and hence $H \wedge B \equiv 0$, this means
that $A \wedge B \not\equiv 0$. So we have a contradiction.
\end{proof}

%
%  Proposition: interpolation of broken implications
%
\begin{proposition}
\label{prop:abs_cex}
Let $A(X,Y) \wedge B(Y,Z) \not\equiv 0$ where $X,Y,Z$ are
non-overlapping sets of variables. Let $H(Y)$ be a formula such that
\Pqe{W}{A}{H}{B} where $W = X \cup Z$.  Let \pnt{y} and \pnt{z} be
assignments to $Y$ and $Z$ respectively such that (\pnt{y},\pnt{z})
satisfies $H \wedge B$. Then (\pnt{y},\pnt{z}) can be extended to an
assignment satisfying $A \wedge B$.
\end{proposition}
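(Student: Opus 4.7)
The plan is to unpack the hypothesis \Pqe{W}{A}{H}{B} at the specific assignment $(\pnt{y},\pnt{z})$ satisfying $H \wedge B$, and then exploit the fact that $B$ does not mention variables in $X$ to patch together a full satisfying assignment for $A \wedge B$.

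First, I would observe that both sides of the equivalence $\prob{W}{A \wedge B} \equiv H \wedge \prob{W}{B}$ depend only on $Y$, since $W = X \cup Z$ and $A(X,Y)$, $B(Y,Z)$ only involve $X \cup Y$ and $Y \cup Z$ respectively. Evaluating the right-hand side at \pnt{y}: from $H(\pnt{y}) = 1$ (since $(\pnt{y},\pnt{z})$ satisfies $H$) and $B(\pnt{y},\pnt{z}) = 1$, the formula $\prob{W}{B}$ evaluates to $1$ under \pnt{y} because \pnt{z} already witnesses the existential (and $B$ doesn't depend on $X$, so any value for $X$ works). Therefore $(H \wedge \prob{W}{B})(\pnt{y}) = 1$.

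By the given equivalence, $\prob{W}{A \wedge B}(\pnt{y}) = 1$ as well. Unpacking this existential gives an assignment $\pnt{x}$ to $X$ and some assignment $\pnt{z}'$ to $Z$ such that $A(\pnt{x},\pnt{y}) = 1$ and $B(\pnt{y},\pnt{z}') = 1$. I only need \pnt{x} from this; I will discard $\pnt{z}'$ and keep the original \pnt{z}.

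Finally, I would assemble the extension $(\pnt{x},\pnt{y},\pnt{z})$ and check satisfaction of $A \wedge B$. Here $A(\pnt{x},\pnt{y}) = 1$ by the choice of \pnt{x}, and $B(\pnt{y},\pnt{z}) = 1$ by the original hypothesis on $(\pnt{y},\pnt{z})$; since these two formulas share only the variables of $Y$ and agree on them, their conjunction is satisfied. The only subtlety to flag is the implicit use of the fact that $B$ is independent of $X$, which is what lets us freely substitute any \pnt{x} coming from the PQE witness without disturbing the $B$-part; there is no real obstacle beyond this bookkeeping, so the proof should be short.
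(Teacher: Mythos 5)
Your proof is correct and follows essentially the same route as the paper's: evaluate the PQE equivalence at $\pnt{y}$ (using $\pnt{z}$ to witness $\exists W[B]$), extract an $X$-assignment from the resulting satisfying extension of $A \wedge B$, and recombine it with the original $\pnt{z}$, relying on the disjointness of the variable sets of $A$ and $B$ outside $Y$. The paper's version is just a more compressed statement of the same argument.
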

\begin{proof}
The fact that \Pqe{W}{A}{H}{B} holds and $H \wedge B$ is satisfied by
(\pnt{y},\pnt{z}) means that \pnt{y} can be extended to an assignment
(\pnt{x^*},\pnt{y},\pnt{z^*}) satisfying $A \wedge B$. Then assignment
(\pnt{x^*},\pnt{y},\pnt{z}) satisfies $A \wedge B$ as well. Indeed,
(\pnt{x^*},\pnt{y}) satisfies $A$ and (\pnt{y},\pnt{z}) does $B$.
\end{proof}

\section{Algorithm For Partial Quantifier Elimination}
\label{app:pqe}
In this section, we discuss Partial Quantifier Elimination (PQE) in
more detail. In Subsection~\ref{app_s:pqe_alg}, we give a high-level
description of a PQE-solver. This PQE-solver is based on the machinery
of Dependency sequents (D-sequents) that we recall in
Subsection~\ref{app_s:dep_seqs}.

%
% Subsection: PQE algorithm
%
\subsection{A PQE solver}
\label{app_s:pqe_alg}

In this subsection, we describe our algorithm for PQE introduced
in~\cite{hvc-14} in 2014.  We will use the same name for this
algorithm as in Section~\ref{sec:experiments}, i.e. PQE-14.  Let
$A(V,W), B(V,W)$ be Boolean formulas where $V$,$W$ are non-overlapping
sets of variables. As we mentioned in Subsection~\ref{subsec:pqe_def},
the PQE problem is to find formula $A^*(V)$ such that
\pqe{W}{A}{A^*}{B}. We will refer to a clause containing a variable of
$W$ as a \pnt{W}\tb{-clause}. PQE-14 is based on the three ideas
below.

\ti{First}, finding formula $A^*$ comes down to generation of clauses
depending only on variables of $V$ that make the $W$-clauses of $A$
redundant in $A^* \wedge \prob{W}{A \wedge B}$.  \ti{Second}, the
clauses of $A^*$ can be derived by resolving clauses of $A \wedge
B$. The intermediate resolvents that are $W$-clauses need to be proved
redundant along with the original $W$-clauses of $A$.  However, since
formula $B$ remains quantified, there is no need to prove redundancy
of $W$-clauses of $B$ or $W$-clauses obtained by resolving only
clauses of $B$.

\ti{Third}, since proving redundancy of a clause is a hard problem it
makes sense to partition this problem into simpler subproblems. To
this end, PQE-14 employs branching. After proving redundancy of
required clauses in subspaces, the results of branches are merged. The
advantage of branching is that for every $W$-clause $C$ one can always
reach a subspace where $C$ can be trivially proved redundant. Namely,
$C$ is trivially redundant in the current subspace if a) $C$ is
satisfied in the current branch; b) $C$ is implied by some other
clause; c) there is an unassigned variable $y$ of $C$ where $y \in W$,
such that $C$ cannot be resolved on $y$ with other clauses that are
not satisfied or proved redundant yet.

%
% Subsection: Dependency sequents
%
\subsection{Dependency sequents}
\label{app_s:dep_seqs}
PQE-14 branches on variables of $V \cup W$ until the $W$-clauses that
are descendants of $W$-clauses of $A$ are proved redundant in the
current subspace. To keep track of conditions under which a $W$-clause
becomes redundant in a subspace, \mbox{PQE-14} uses the machinery of
Dependency sequents (D-sequents) developed
in~\cite{fmcad12,fmcad13}. A \tb{D-sequent} is a record of the form
\Dss{W}{A \wedge B}{q}{\s{C}}. It states that clause $C$ is redundant
in formula \prob{W}{A \wedge B} in subspace \pnt{q}. Here \pnt{q} is
an assignment to variables of $V \wedge W$ and $A$ is the \ti{current}
formula that consists of the initial clauses of $A$ and the resolvent
clauses.  When a $W$-clause $C$ is proved redundant in a subspace,
this fact is recorded as a D-sequent. If $\pnt{q}=\emptyset$, the
D-sequent is called \ti{unconditional}. Derivation of such a D-sequent
means that clause $C$ is redundant in the current formula \prob{W}{A
  \wedge B} in the entire space.

The objective of PQE-14 is to derive unconditional D-sequents for all
$W$-clauses of $A$ and their descendants that are $W$-clauses.  A new
D-sequent can be obtained from two parent D-sequents by a
resolution-like operation on a variable $y$. This operation is called
\ti{join}. When PQE-14 merges the results of branching on variable $y$
it joins D-sequents obtained in branches $y=0$ and $y=1$ at variable
$y$.  So the resulting D-sequents do not depend on $y$.  If formula $A
\wedge B$ is unsatisfiable in both branches, a new clause $C$ is added
to formula $A$. Clause $C$ is obtained by resolving a clause falsified
in subspace $y=0$ with a clause falsified in subspace $y=1$ on
$y$. Adding $C$ makes all $W$-clauses redundant in the current
subspace.  By the time PQE-14 backtracks to the root of the search
tree, it has derived unconditional D-sequents for all $W$-clauses of
the current formula $A$.

Algorithms based on D-sequents (including PQE solving) is work in
progress. So they still lack some important techniques like D-sequent
re-using.  In the current algorithms based on D-sequents, the parent
D-sequents are discarded as soon as they produce a new D-sequent by
the join operation. Although D-sequent re-using promises to be as
powerful as re-using learned clauses in SAT-solving, it requires more
sophisticated bookkeeping and so is not implemented yet~\cite{hvc-14}.

%\input{l2ook_ahead_bfs}
%
% Cuts and levels
%
\section{Generation Of Cuts That Do Not Overlap}
\label{app:cuts_levels}
An important part of \EC described in Section~\ref{sec:ec_by_lor} is
to build non-overlapping cuts. These cuts are used to generate a
sequence of boundary formulas converging to an output boundary
formula. As we mentioned there, the presence of non-local connections
makes it hard to find cuts that do not overlap. In this section, we
consider this issue in more detail.  First, we give the necessary
definitions and describe the problem. Then we explain how one can get
rid of non-local connections by buffer insertion.

Let $M$ be a multi-output circuit. The \ti{length} of a path
from an output of a gate to an input of another gate is measured by
the number of gates on this path.The \ti{topological level} of a gate
$g$ is the longest path from an input of $M$ to $g$. We treat the
inputs of $M$ as special gates that are not fed by other gates. We
will denote the topological level of gate $g$ as \tl{g}. It can be
computed recursively as follows. If $g$ is an input, then \tl{g} =
0. Otherwise, \tl{g} is equal to the maximum topological level among
the gates feeding $g$ plus 1.

\setlength{\intextsep}{4pt}
\begin{wrapfigure}{l}{1in}
%\begin{figure} 
 \begin{center}
    \includegraphics[height=1.2in,width=0.5in]{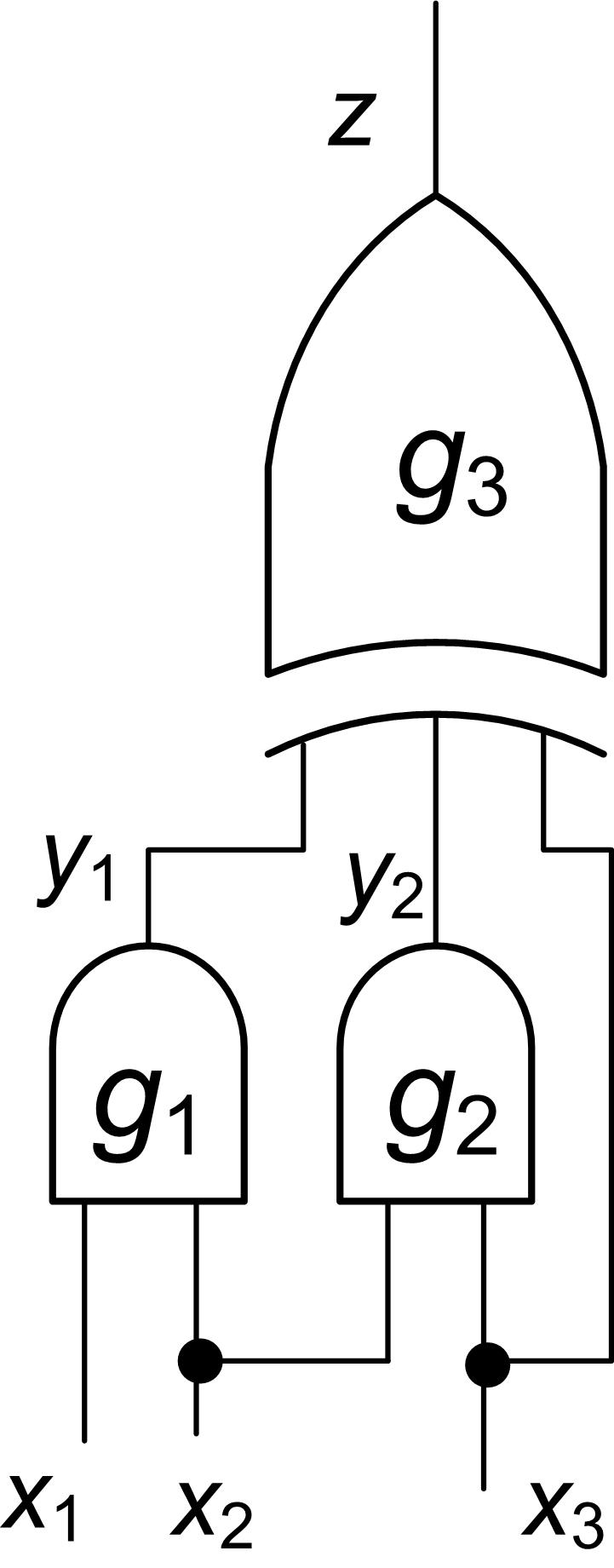}
  \end{center}
\vspace{-10pt}
\caption{A circuit}
\vspace{15pt}
\label{fig:top_lvls}
%\end{figure}
\end{wrapfigure}

We will call gates $g_i$ and $g_j$ \ti{topologically independent} if
there is no path from an input to an output of $M$ going through both
these gates. For instance, gates $g_1$ and $g_2$ in
Fig.~\ref{fig:top_lvls} are topologically independent.  We will call a
set $S$ of gates a \tb{cut}, if every path from an input to an output
of $M$ goes through a gate of $S$.  A cut $S$ is \ti{minimal}, if for
every gate $g \in S$, set \mbox{$S \setminus \s{g}$} is not a cut. \EC
employs only minimal cuts. In this section, we use the notion of a
gate and the variable specifying its output interchangeably. For
example, the topological level of a variable $v$ specifying the output
of gate $g$ (denoted as \tl{v}) is equal to \tl{g}.

If gate $g_i$ of $M$ feeds gate $g_j$ and $\tl{g_j} > \tl{g_i}+1$,
then $g_i$ and $g_j$ are said to have a \tb{non-local connection}.
Non-local connections make topologically dependent gates appear on the
same cut. Consider the circuit of Fig~\ref{fig:top_lvls}. The input
gate $x_3$ feeds gates $g_2$ and $g_3$.  Since \tl{x_3}=0 and
\tl{g_3}=2, the connection between $x_3$ and $g_3$ is non-local.  This
leads to appearance of cut \s{y_1,y_2,x_3} where variables $y_2$ and
$x_3$ are topologically dependent.  If gate $g_i$ feeds gate $g_j$ and
this connection is non-local, gate $g_i$ appears in every cut that
separates $g_i$ and $g_j$ and does not include $g_j$.  So the presence
of a large number of non-local connections leads to the heavy
overlapping of cuts.

There are a few techniques for dealing with non-local connections of
$N'$ and $N''$ in the context of EC by LoR. The simplest one is to
insert buffers. A buffer is a single-input and single-output gate that
copies its input to the output. Let $g_i$ and $g_j$ be gates of $N'$
such that a) $g_i$ feeds $g_j$ and b) $\tl{g_j} > \tl{g_i}+1$. By
inserting $\tl{g_j}-\tl{g_i}-1$ buffers between $g_i$ to $g_j$, this
non-local connection is replaced with $\tl{g_j}-\tl{g_i}$ local
connections.

%
% Section: Algorithm we used
%
\section{Version of \EC Used In Experiments}
\label{app:version}
In the experiments of Subsection~\ref{subsec:ec_lor}, we used a
version of \EC that was modified in comparison to the description
given in Fig.~\ref{fig:ec_by_lor}.  We will refer to this version as
\ecm. In this section, we describe \ECM in more detail.

Boundary formula $H_i$ was computed in \ECM as follows. If there was a
variable specifying the output of a cut gate $g'$ that was not present
in a clause of $H_i$, \ECM called the procedure below. That procedure
generated short clauses relating the output variable of $g'$ and those
of its ``relatives'' from $N''$. This way, \ECM avoided running a cut
termination check before every variable of \mbox{$i$-th} cut was
present in a clause of $H_i$.

Clauses of $H_i$ constraining variable of $g'$ were generated as
follows.  First, \ECM identified the relatives $g''_1,\dots,g''_m$ of
gate $g'$ in $N''$. A gate $g''_j$ was considered a relative of $g'$
if there was a clause of formula $H_{i-1}$ relating input variables of
$g'$ and $g''_j$.  Finally, a set of clauses $A^*$ relating the output
variable of gate $g'$ and those of its relatives was generated and
added to formula $H_i$. The clauses of $A^*$ were obtained by taking
formula $A$ out of the scope of quantifiers in \prob{W}{A \wedge B}.
Here $A$ is the set of clauses of formulas $H_{i-1}$ containing the
input variables of gate $g'$ and its relatives. Formula $B$ contains
the clauses specifying gate $g'$ and its relatives. Set $W$ consists
of the variables of $A \wedge B$ minus output variables of $g'$ and
its relatives.

Another modification of \ECM was that boundary formulas were computed
approximately. In line 11 of Fig~\ref{fig:ec_by_lor}, formula
$F_{M_i}$ specifying the gates located between inputs of $N'$ and
$N''$ and $i$-th cut is used to compute a new clause of $H_i$.  In
\ECM only the subset of clauses of $F_{M_i}$ specifying the gates
located between $(i-1)$-th and $i$-th cuts was used when computing
$H_i$.

\section{Computing Boundary Formulas Efficiently}
\label{app:eff_comp_bfs}
Computation of a boundary formula is based on PQE solving. In turn, a
PQE-solver is based on derivation of D-sequents (see
Subsection~\ref{app_s:dep_seqs} of the appendix).  As we showed in
Subsection~\ref{ssec:ind_comp_bf}, boundary formula $H_i$ is obtained
by taking $H_{i-1}$ out of the scope of quantifiers in formula
\prob{W_i}{H_{i-1} \wedge F_{M_i}}.  Here $F_{M_i}$ specifies the
gates located between inputs of circuits $N',N''$ and $i$-th cut and
$W_i$ is the set of variables of $F_{M_i}$ minus those of the $i$-th
cut. Since the size of formula $F_{M_i}$ grows with $i$, a PQE-solver
that computes $H_i$ \ti{precisely} must have high scalability.  PQE-14
(see Section~\ref{app:pqe}) does not scale well yet because it does
not re-use D-sequents.  In this section, we argue that once D-sequent
re-using is implemented, efficient computation of boundary formulas
becomes quite possible.

Consider the scalability problem in more detail. Formula $H_i$ is
obtained by generating a set of clauses that make the clauses of
$H_{i-1}$ redundant. Let $C \in H_{i-1}$ be a clause whose redundancy
one needs to prove. PQE-14 is a branching algorithm. Clause $C$ is
trivially redundant in every subspace where $C$ is satisfied. Proving
redundancy is non-trivial only in the subspace where $C$ is
falsified. To prove $C$ redundant in such a subspace, PQE-14 uses the
machinery of local proofs of redundancy described below. (For the sake
of simplicity we did not mention this aspect of PQE-14 in
Section~\ref{app:pqe}.)

Suppose clause $C$ above contains the positive literal of variable
$v$.  Suppose, in the current branch, literal $v$ is unassigned and
all the other literals of $C$ are falsified. So $C$ is currently a
unit clause. Then PQE-14 marks all the clauses containing literal
$\overline{v}$ that can be resolved with $C$ as ones that have to be
proved redundant in branch $v=1$ i.e. \ti{locally}. This is done even
for clauses of $F_{M_i}$ (that do not have to be proved redundant
\ti{globally} because $F_{M_i}$ remains quantified). The obligation to
prove redundancy of clauses with literal $\overline{v}$ is made to
prove redundancy of $C$ in the branch where $v=0$ and $C$ is
falsified. This obligation is canceled immediately after PQE-14
backtracks to the node $w$ of the search tree that precedes node $v$.
When exploring branch $v=1$ one of the two alternatives occurs. If
formula is UNSAT in this branch, a new clause is generated that
subsumes $C$ in node $w$. Adding this clause to \prob{W_i}{H_{i-1}
  \wedge F_{M_i}} makes $C$ redundant in node $w$.  Otherwise, clauses
with literal $\overline{v}$ are proved redundant in node $w$ and so
$C$ is redundant in node $w$ because it cannot be resolved on $v$ in
the current subspace. (This also means that $C$ is redundant in branch
$v=1$ where $C$ is falsified.)

To prove that a clause $B$ with literal $\overline{v}$ is redundant in
branch $v=1$ one may need to make obligations to prove redundancy of
some other clauses that can be resolved with $B$ on one of its
variables and so on. So proving redundancy of one clause $C$ makes
PQE-14 prove local redundancy of many clauses. Currently PQE-14
discards a D-sequent as soon as it is joined at a branching variable
of the search tree (with some other D-sequent). Moreover, the
D-sequent of a clause that one needs to prove only locally is
discarded after the obligation to prove redundancy of this clause is
canceled.  This cripples the scalability of PQE-14 because one has to
reproduce D-sequents seen before over and over again.  As the size of
formula $F_{M_i}$ grows, more and more clauses need to be proved
redundant locally and the size of the search tree blows up.

Re-using D-sequents should lead to drastic reduction of the search
tree size for two reasons.  First, when proving redundancy of clause
$C$ one can immediately discard every clause whose D-sequent states
the redundancy of this clause in the current subspace.  Second, one
can re-use D-sequents of clauses of $F_{M_j}$, $j < i$ that were
derived when building formula $H_j$. Informally, D-sequent re-using
should boost the performance of a PQE algorithm like re-using learned
clauses boosts that of a SAT-solver.

\end{document}